\newtheorem{proposition}{Proposition}
\theoremstyle{remark}
\newtheorem{remark}{Remark}
\newtheorem{example}{Example}
\DeclareMathOperator*{\argmin}{arg\,min}
\title{Optimal trading policies for wind energy producer} 
\author{Zongjun Tan and Peter Tankov\\ Laboratoire de Probabilit\'es et Mod\`eles Al\'eatoires\\ Universit\'e Paris-Diderot} 
\date{}
\begin{document}

\maketitle

\begin{abstract}
We study the optimal trading policies for a wind energy producer
who aims to sell the future production in the open forward, spot,
intraday and adjustment markets, and who has access to imperfect
dynamically updated forecasts of the future production. We construct a
stochastic model for the forecast evolution and determine the optimal
trading policies which are updated dynamically as new forecast
information becomes available. Our results allow to quantify the
expected future gain of the wind producer and to determine the
economic value of the forecasts. 
\end{abstract}

Key words: wind energy, forecasts, optimal trading policies,
stochastic control

\section{Introduction}
Wind power is now widely recognized as an important part of the
global energy mix, and the actors of the energy industry have no
choice but to cope with the intermittent and to a large extent
unpredictable nature of the wind power production. In particular, as
the guaranteed purchase schemes are either phased out or replaced with
more market-oriented subsidies, the wind power producers face the need
to sell the future power production in the open markets in the absence
of precise knowledge of the volume to be produced. The need of wind
power producers to adjust their delivery volume estimates as the
forecast becomes more precise is one of the factors behind the
development of intraday electricity markets, at which power can be
traded up to 45 minutes prior to delivery. 

The aim of this paper is to determine the optimal strategies for
selling the future power production of a single wind park for a wind producer who has access
to imperfect dynamically updated forecast of the future production,
which becomes progressively more precise as the production horizon
draws near. We formulate this problem as a stochastic optimization
problem where the power producer aims to maximize the expected gain
from selling electricity penalized by terms accounting for market
illiquidity and the extra cost of using the adjustment
market. To solve this problem, we develop a stochastic model for the
forecast evolution, and determine the optimal trading strategy which
is updated dynamically as new forecast information becomes available.
This allows to quantify the optimal expected gain for the producer,
and to compare the expected gain under different assumptions on the
forecast dynamics, thus quantifying the economic value of different
forecasts.

Wind power producers in Europe and in many other countries with
deregulated energy sector have access to four types of markets. 
\begin{itemize}
\item The forward market -- more than 1 day prior to delivery,
  delivery periods are day, week, month, quarter and year.
\item {Spot market} -- 1 day prior to delivery, delivery period
  is 1 hour or 30 minutes.
\item {Intraday market} -- between 1 day and 45 min,
  delivery period is 15 minutes.
\item {Adjustment} (imbalance) market (usually managed by the power
  network operator such as RTE in France) --
  the last 45 minutes. In the adjustment market, the bid-ask spread is
  very wide, which may be interpreted as a penalty imposed on the agents
 for using this market. 
\end{itemize}

%$$
%\xrightarrow[\hspace*{3.3cm}t-24h\hspace{3.5cm}t-45min\hspace{0.7cm}\text{Production date $t$}]{\text{Forward market\hspace{1cm} Spot market \hspace{1cm}
%  Intraday market\hspace{2cm} Adjustment market}}
%$$

Optimal trading strategies for wind power producer with a focus on
intraday markets have been
considered by several authors. Morales et al.~\cite{morales2010short}
consider the short-term trading for a wind power producer and
determine the optimal strategies starting from a small number of
scenarios of wind power production generated with an autoregressive
model, without taking into account the available forecasts. Henriot
\cite{henriot2014market} studies optimal design of intraday markets in
the presence of wind power producers who use certain pre-determined
strategies (without optimization). Garnier and Madlener
\cite{garnier2015balancing} show how forecast errors may be corrected
by optimal trading in intraday markets. The paper which is closest in
spirit to ours is A\"id et al.~\cite{aid2016optimal}. These authors
consider the optimal trading problem in intraday markets in the
presence of imperfect demand forecasts and market impact, however,
unlike our paper they do not focus on wind energy.

The rest of the paper is structured as follows. In section
\ref{production} we study the realized production data and show that
the distribution of the realized production is well described with a
truncated log-normal distribution. Section \ref{forecast} focuses on
forecast dynamics: using some ideas from financial mathematics, we
develop a stochastic model for the forecast evolution which is
compatible with the truncated log-normal distribution for the realized
production. Finally, in Section \ref{optimal}, we formulate and solve
in several different settings, relevant for large and small power producers,
the optimization problem for the wind power producer who aims to
maximize the expected gain from selling the future production.

%------------------------------------------------
%------------------------------------------------
\section{Modeling the realized production} \label{production}
We define the normalized output power of a wind park $F_T$ by 
$$
F_T = 0\vee \frac{P_T}{P_{\max}},
$$
where $P_T$ is the actual instantaneous power production (in practice
the instantaneous production will be replaced with 10-minute average),
and $P_{max}$ is the rated power of the park. Since some of the
turbine equipment consumes power, the actual realized power production
may sometimes have small negative values; to remove this effect, the
normalized power output is truncated from below by $0$. 

To build a model for the normalized output power, we assume that $F_T$
is obtained by applying a ``stylized power curve'' $f_{prod}$ to
the ``stylized wind speed'' $X_T$:
$$
F_T = f_{prod}(X_T)
$$ 
We emphasize that the model is built for
the output power directly and not for the wind; the power curve and
wind speed are introduced merely to provide a rationale for the
model. The stylized wind speed $X_T$ follows a log-normal
distribution with parameters $\mu_X$ and $\nu_X$, whose density is 
$$
\rho_X(x) =  \frac{1}{x \sqrt{2 \pi} \nu_X} \exp \left( \displaystyle - \frac{1}{2} \left( \frac{\ln x - \mu_X}{\nu_X} \right)^2 \right).
$$
We assume that the variable $X_T$ follows a log-normal distribution because:
\begin{itemize}
	\item The log-normal distribution has been used in the
          literature as a model for wind speeds
          \cite{garcia1998fitting}. It is also quite close to the Weibull
          distribution, which is the parametric model of choice for
          wind speed data; 
	\item The log-normal distribution is analytically tractable and allows to introduce a dynamical aspect into the model via a Brownian motion.
\end{itemize}

The stylized production function is  
\begin{equation*}
f_{prod}(x) = \frac{ ( x - x_{min} )^{+} - (x - x_{max} )^{+} }{x_{max}- x_{min}}.
\end{equation*}
This shape of this function is illustrated in Figure
\ref{prodfunc.fig}; note that there is no cut-out. 
\begin{figure}
\centerline{\includegraphics[width=0.6\textwidth]{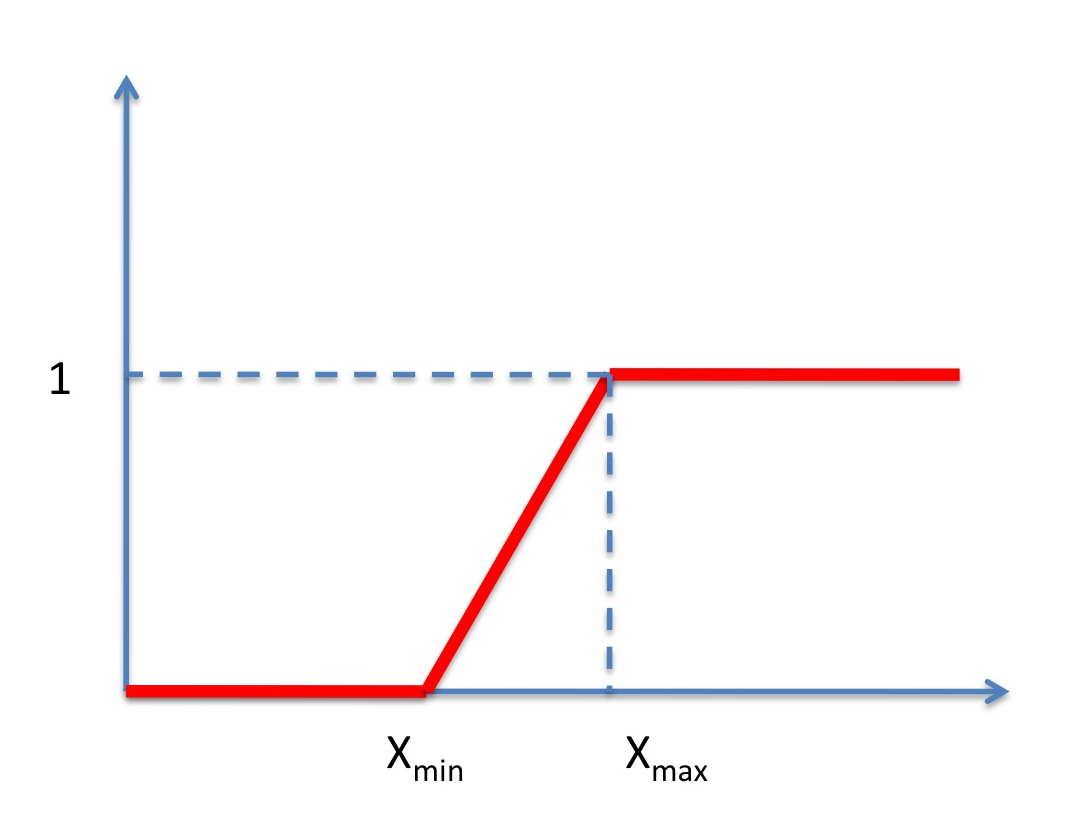}}
\caption{Stylized power curve used to model the realized production}
\label{prodfunc.fig}
\end{figure}

The above assumptions imply that $F_T $ follows a {truncated
  log-normal} distribution with parameters
\begin{equation}
	\left \{
	\begin{array}{l}
		\zeta = - \frac{x_{min}}{x_{max}-x_{min}} \\
		\mu  = \mu_{X} - \ln(x_{max}-x_{min})  \\
		\nu = \nu_{X}.
	\end{array}
	\right.
	\label{eq:3_parameter_representation}
\end{equation}
On the interval $(0,1)$ this distribution is absolutely continuous with
density given by 
\begin{equation}
	\left\{	
	\begin{array}{rcl}
	\rho_{F}(y|\mu,\nu,\zeta) &=&	\displaystyle  \frac{1}{ ( y - \zeta)  \sqrt{2 \pi} \nu} \exp \left( - \frac{(\ln( y - \zeta) - \mu)^2}{ 2 \nu^2}\right).
%	 cdf_{F_T}(y) &=& 	\displaystyle \Phi \left( \frac{\ln(y - \zeta) - \mu}{\nu} \right)\\
	\end{array}	
	\right.	
\end{equation}
In addition, at points $0$ and at $1$ the distribution has atoms given by
\begin{align*}
	\mathbb P[F_T=0] &=  \mathbb{P}(X_T \leq x_{min}) =
        \Phi\left(\frac{\ln x_{min} - \mu_{X}}{\nu_{X}} \right) =
        \Phi\left(\frac{\ln(-\zeta) - \mu}{\nu} \right):= P_0(\mu,\nu,\zeta)  \\
	\mathbb P[F_T = 1] &= \mathbb{P}(X_T > x_{max}) = 1 -
        \Phi\left(\frac{\ln(1 -\zeta) - \mu}{\nu} \right):= P_1(\mu,\nu,\zeta).  
\end{align*}
Note that while the original construction used four
parameters $(\mu_X,\nu_X,x_{min},x_{max})$, one parameter is
redundant, and the distribution of $F_T$ is completely characterized
by the three parameters $\mu,\nu,\zeta$. To remove this redundancy, we
shall set $\mu_X = -\frac{1}{2} \nu_{X}^2$ in the following, which
ensures that $\mathbb E[X_T]  =1$.

% 

%------------------------------------------------------
\paragraph{Fitting the model}

The model was fitted to the output power at the wind park level for 3 wind parks in
           France, sampled at 10-minute intervals from  Jan 1st,
           2011 to  Jan 1st, 2015, provided by the company Ma\"ia Eolis (hereafter
           referred to as Plant 1, Plant 2 and Plant 3). Figure
           \ref{realized.fig} shows the histograms of the realized
           production for the three plants (plants are numbered from
           left to right in this and other graphs).

\begin{figure}
	\centerline{\includegraphics[width=0.35\textwidth]{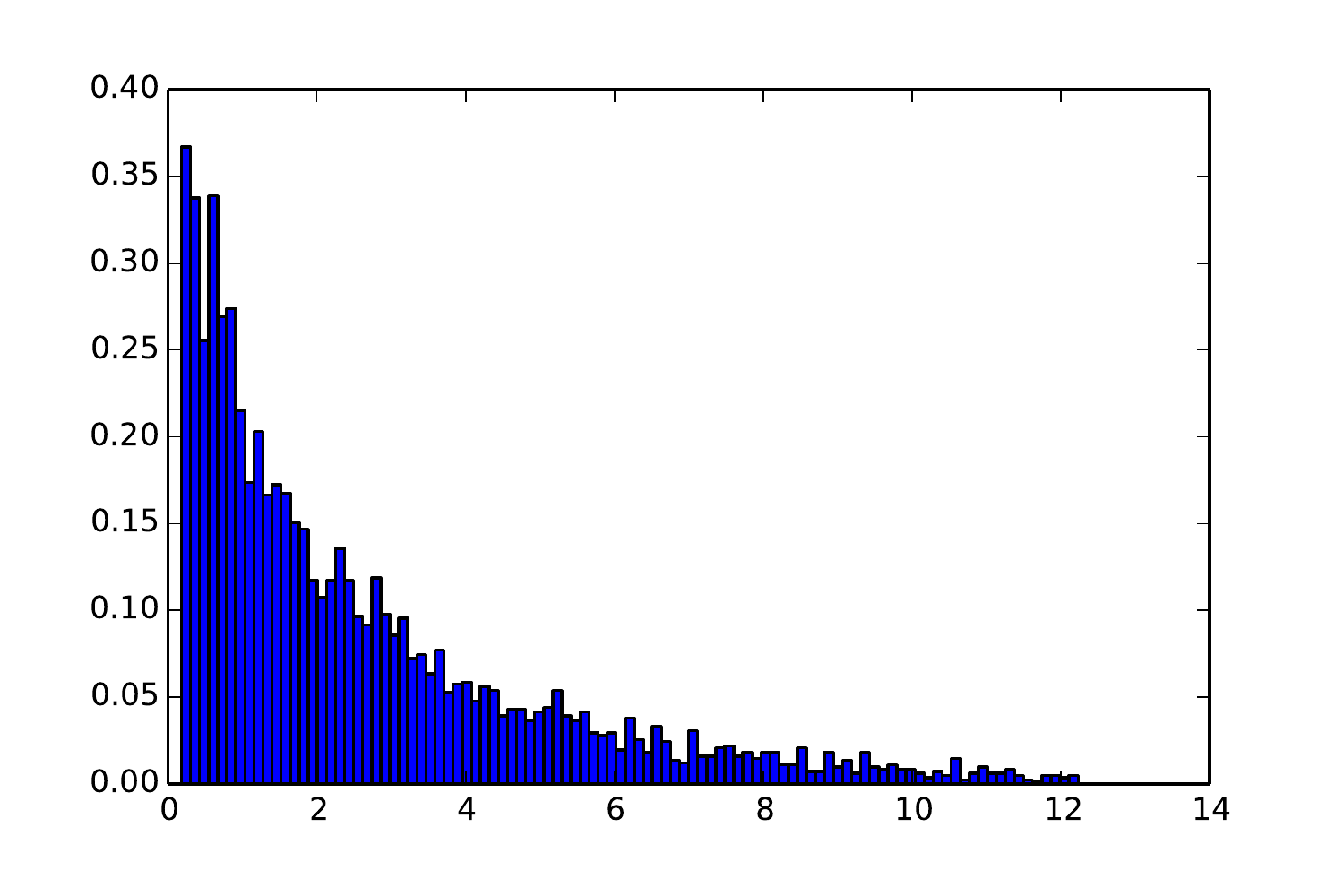}
	\includegraphics[width=0.35\textwidth]{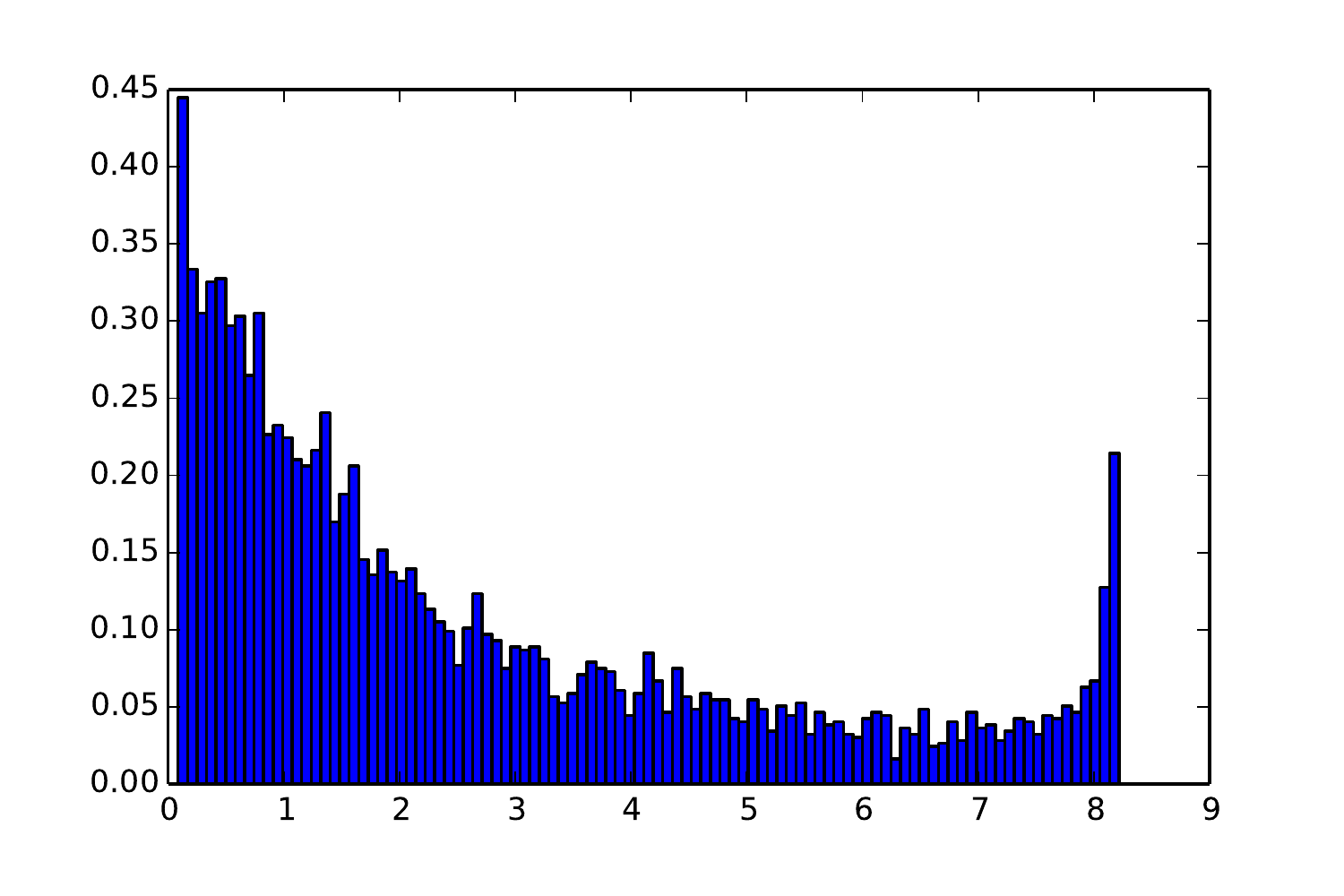}
	\includegraphics[width=0.35\textwidth]{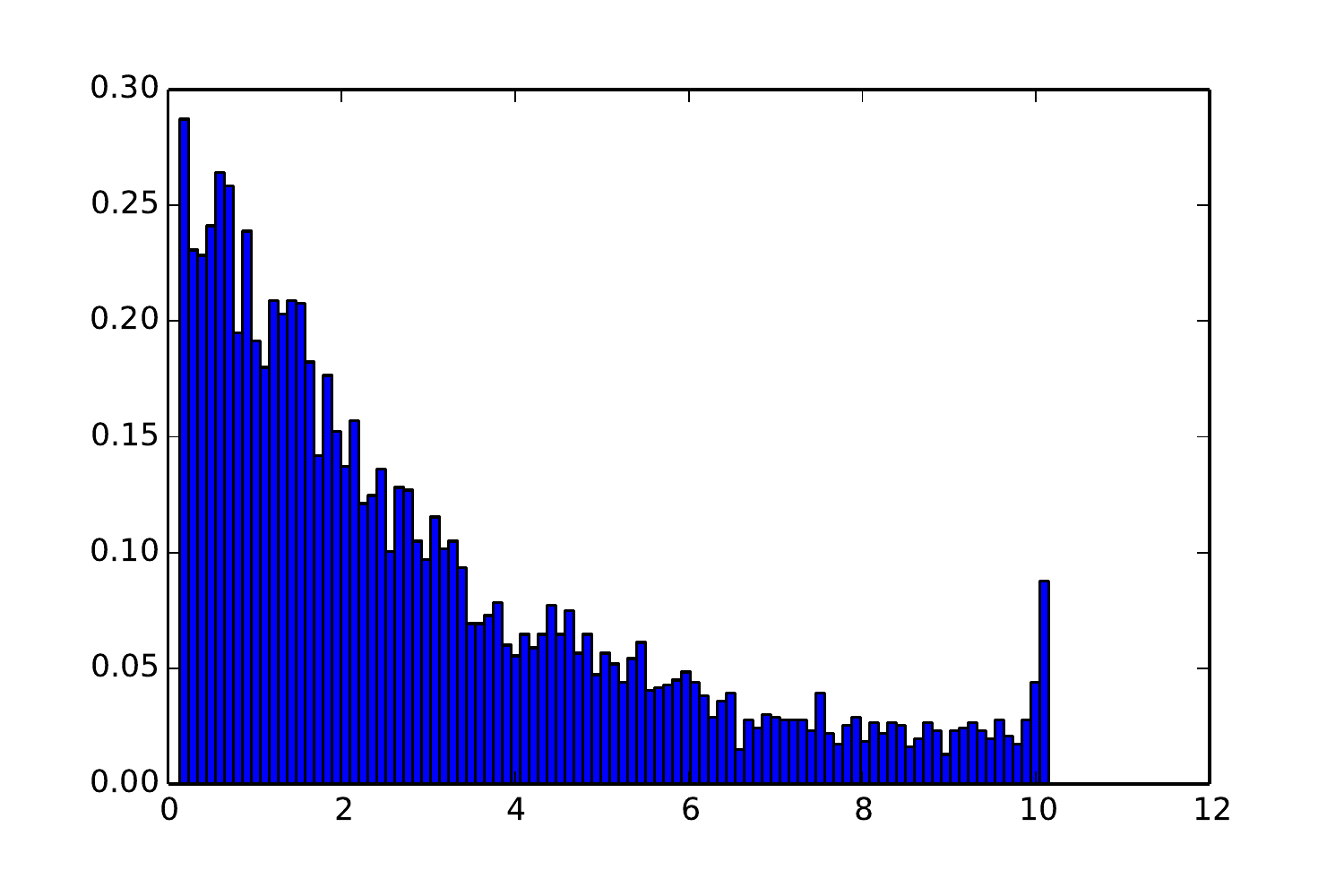}}
\caption{Histograms of 10-minute realized power production, with
  4-hour subsampling, for the three power
  plants which are the object of this study, excluding the atom at
  zero. }
\label{realized.fig}
\end{figure}

Denote the observed normalized output power values by
$(F_T^{k})_{k=1}^N$, and assume that they are arranged in
\emph{increasing order}. 
%We propose two alternative algorithms: maximum likelihood and quantile
%approximation. 
%\paragraph{Maximum likelihood}
%The likelihood function is defined by 
%$$
%\mathcal L(\mu,\nu,\zeta) = P_0(\mu,\nu,\zeta)^{r_0}
%P_1(\mu,\nu,\zeta)^{r_1} \prod_{k=r_0+1}^{N-r_2}
%\rho_F(F^k_T|\mu,\nu,\zeta), 
%$$
%where 
%$$
%r_0 = \min\{k: F^k_T >0\}-1 \quad \text{and}\quad r_0 = N - \max\{k:
%F^k_T <1\}. 
%$$
The method consists in minimizing the Euclidean distance between the
empirical quantiles and the quantiles of the theoretical
distribution. More precisely, given $\alpha \in [0,1]$, we define the
empirical quantile
\begin{equation}
q^\alpha_{emp}  = \max \left\{ F^k_{T} \Big| \frac{k}{N} \leq \alpha \right\},
\end{equation}
and, for $P_0(\mu,\nu,\zeta)\leq \alpha \leq 1-P_1(\mu,\nu,\zeta)$, we
define the theoretical quantile
$$
q^{\alpha}(\mu,\nu,\zeta) = \max \left\{ x \vert \Phi \left( \frac{ \ln(x - \zeta) - \mu}{\sigma} \right) \leq \alpha\right\},  
$$
where $\Phi$ is the standard normal distribution function. 
The parameters are estimated by minimizing 
$$
\sum_{l=1}^L \left( q^{\alpha_l(\mu,\nu,\zeta)}_{emp}- q^{\alpha_l(\mu,\nu,\zeta)}(\mu,\nu,\zeta) \right)^2,
$$
where $(\alpha_l)_{l=1}^L$ are probability levels, uniformly spaced
between  $P_0(\mu,\nu,\zeta)$ and  $1-P_1(\mu,\nu,\zeta)$, that is,
$$
\alpha_l = P_0(\mu,\nu,\zeta) + \frac{l-1}{L} (1 - P_1(\mu,\nu,\zeta)
- P_0(\mu,\nu,\zeta)). 
$$
In the numerical example below, $L=100$ probability levels were used. 

Table \ref{table:production_calibration} gives the fitted optimal
parameters $(\mu^*, \nu^*, \zeta^*)$ and the corresponding latent
parameters $(\mu_{X_T}, \nu_{X_T}, x_{min}, x_{max})$ obtained for the
three power plants. 
%From equation \eqref{eq:parameters_latent_variable}, we get the latent parameters $(\mu_{X_T}, \nu_{X_T}, x_{min}, x_{max})$ for latent variable $X_T$ and for the production function $f_{prod}$. The corresponding $95\%$ confidential intervals are obtained by computing separately the latent parameters for each resample data set , and then choose the $2.5\%$ and $97.5\%$ percentiles among these 500 groups of latent parameters. 
\begin{table}
	\centering
	\begin{tabular}{l|c|c|c}
		\hline
		{Parameters} & Plant 1 & Plant 2 & Plant 3\\
		\hline \hline
		$\mu$ &$-1.46551$&$-0.60213$&$-0.76199$\\
		$\sigma$ &$0.66020$&$0.46158$& $0.48778$\\
		$\zeta$ &$-0.13248$&$-0.33757$& $-0.26449$\\
		\hline \hline
		$x_{min}$ &$0.46129$&$0.55412$&$0.50312$\\
		$x_{max}$ &$3.94322$&$2.19561$&$2.40534$ \\
		$\mu_{X_T}$ &$-0.21793$&$-0.10653$&$-0.11896$\\
		$\sigma_{X_T}$ &$0.66020$&$0.46158$&$0.48778$ \\
		\hline
		
	\end{tabular}
	\caption[Parameters for power production]{Fitted parameters of
          normalized production and the corresponding parameters
          $(x_{min},x_{max},\mu_{X_T}, \sigma_{X_T})$.}
	\label{table:production_calibration}
\end{table}
The fitted truncated log-normal densities are
shown in Figure \ref{fittedprod.fig}. 

\begin{figure}
	\centerline{\includegraphics[width=0.35\textwidth]{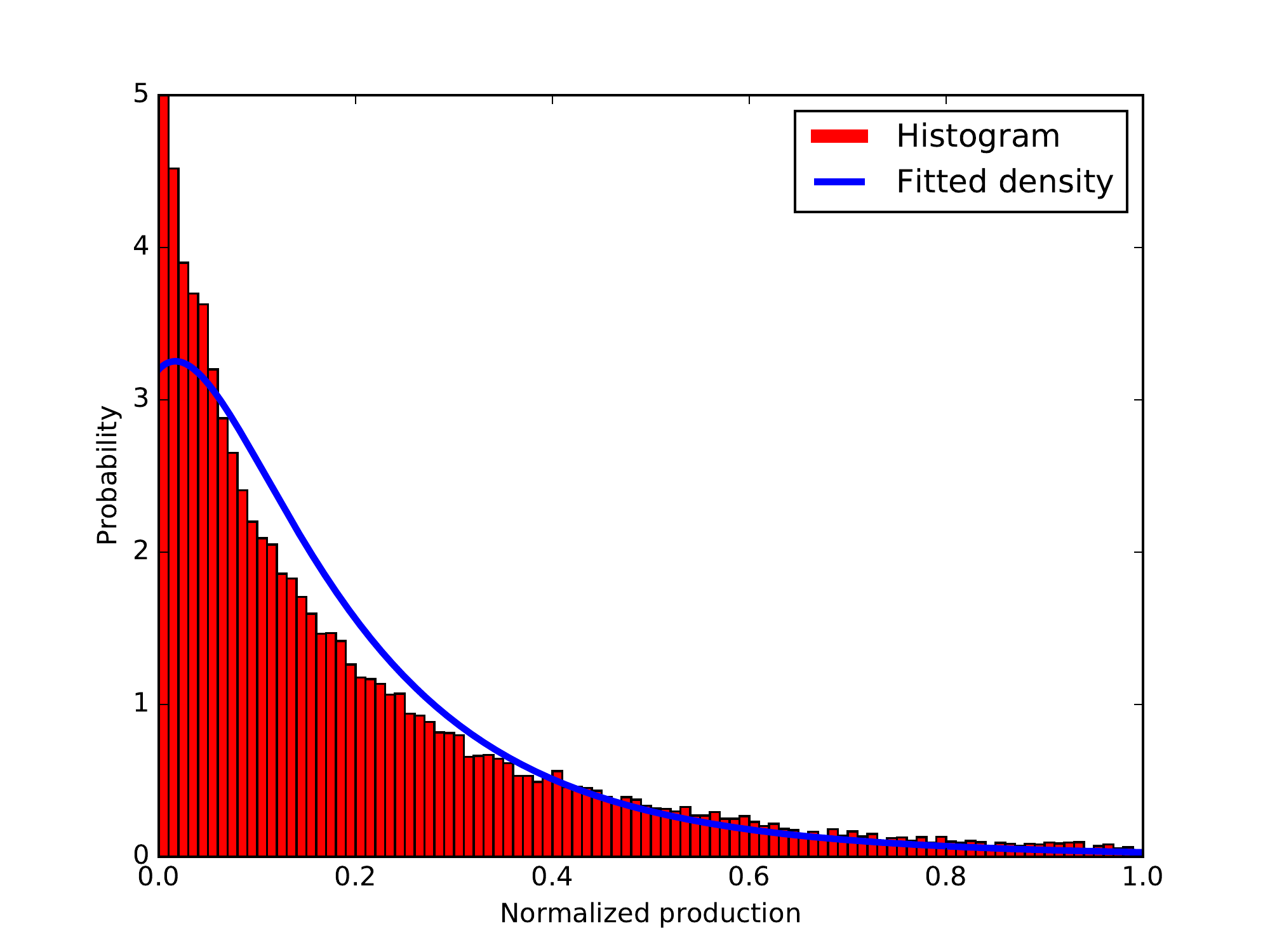}
	\includegraphics[width=0.35\textwidth]{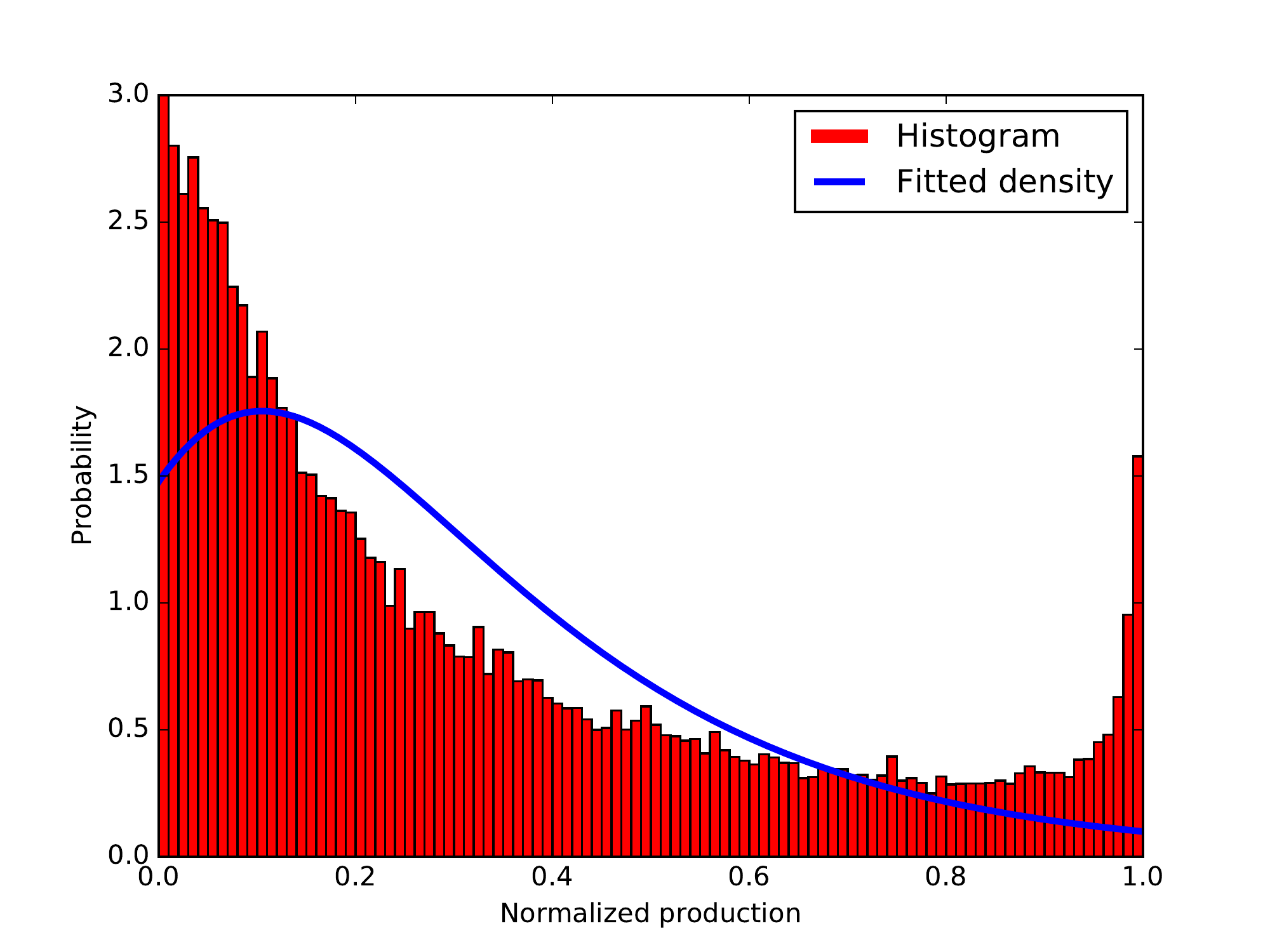}
	\includegraphics[width=0.35\textwidth]{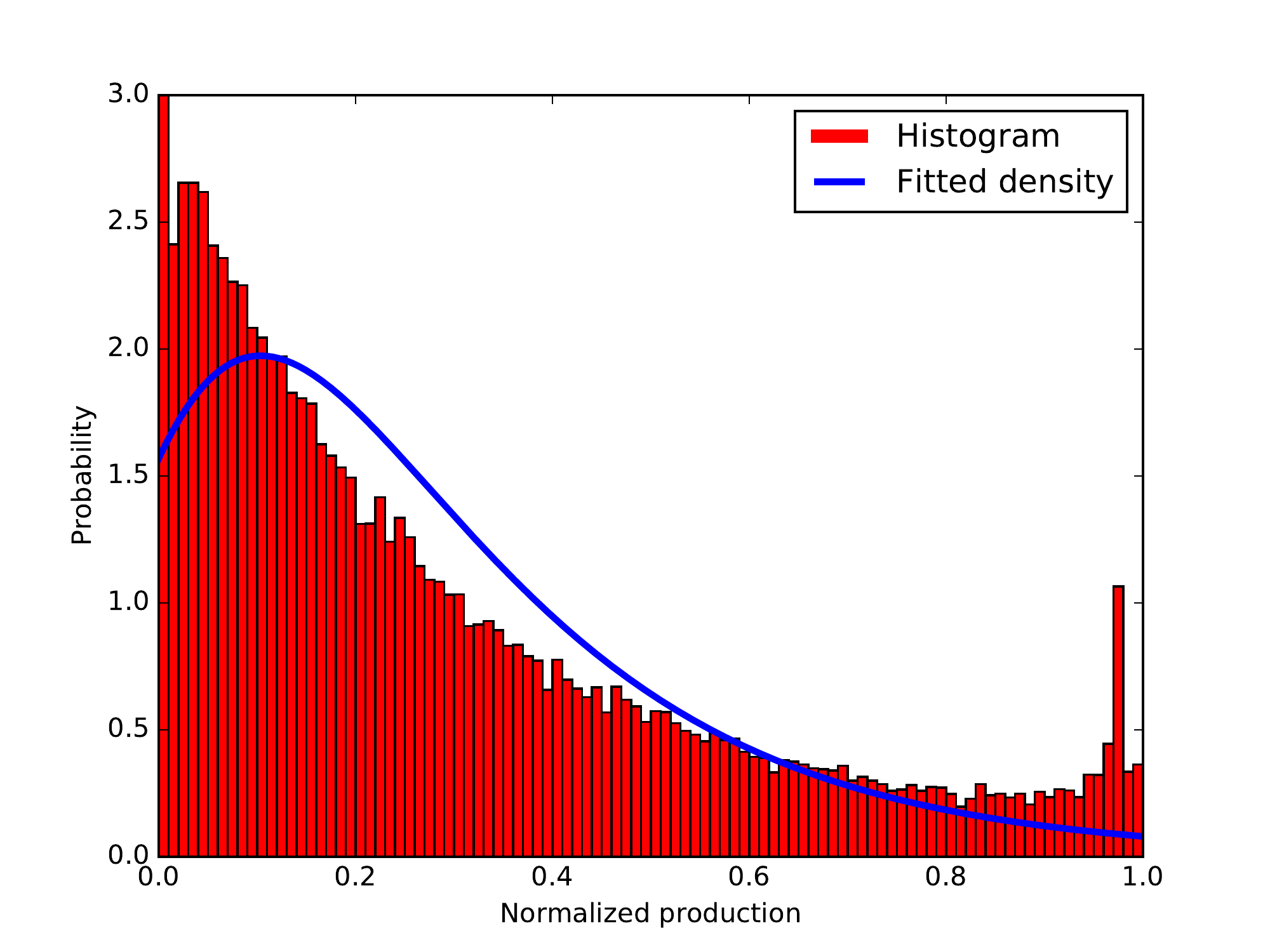}}
\caption{Fitted vs.~empirical densities for the three power
  plants. The total production has been normalized to one.}
\label{fittedprod.fig}
\end{figure}

\section{Modeling the forecast dynamics}\label{forecast}

To understand how to optimally update the trading strategy depending on
  the forecast, we need to build a dynamic stochastic model for the forecast,
  which is consistent with the distribution of the realized production
  described in the previous section. More precisely, at every date $t$
  we assume that the forecast $F_t$ is the best prediction of the
  realized production given the available information. 

To build a forecast model formalizing this idea, we need to define
 a filtration $\mathbb F = (\mathcal F_t)_{0\leq t\leq T}$, where
  $\mathcal F_t$ models the information available to the wind producer
  at time $t$, and a stochastic process $(F_t)_{0\leq t\leq T}$ with the
  following properties:
\begin{itemize}
\item It is a martingale with respect to the filtration $\mathbb F$;
\item $F_T$ has the truncated log-normal distribution described in the
  preceding section. 
\end{itemize}

We now proceed with the construction of the filtration and the process
$F$. Let $W$ be a standard Brownian motion, and $Z$ be a standard
normal random variable independent from $W$. We define the process
$(X_t)_{0\leq t\leq T}$ by 
$$
X_t  = \exp\left(\int_0^t \sigma(s) dW_s - \frac{1}{2}\int_0^t
  \sigma^2(s) ds \right), t<T
$$
and 
$$
X_T = \exp\left(\int_0^T \sigma(s) dW_s - \frac{1}{2}\int_0^T
  \sigma^2(s) ds \right) e^{bZ - \frac{b^2}{2}}. 
$$
where $(\sigma(s))_{0\leq s\leq T}$ is a square integrable
deterministic function and $b\geq 0$. We then define $\mathbb F$ to be
the natural filtration of $X$ completed with the null sets. 

In other
words, for each fixed $t$,
$$
X_T  \stackrel{d}{=} X_t e^{\sqrt{\theta(t)} N - \frac{\theta(t)}{2}}\quad
\text{where}\quad N\sim N(0,1)\quad \text{and}\quad \theta(t) =
\int_t^T \sigma(s)^2 ds + b^2,
$$
and $N$ is independent from $X_t$. Letting $(\mathcal F_t)_{0\leq t
  <T}$ be the completed natural filtration of the Brownian motion $W$
and $\mathcal F_T:= \mathcal F_0 \vee \sigma(Z)\vee \sigma(W_s, 0\leq s \leq T)$, 
we see that $X$ is an $\mathbb F$-martingale
which means that the variable $X_t$ may be seen as the best prediction
of the stylized wind $X_T$ given the information available at time
$t$. The jump at time $T$ is needed to model
the component of the wind which is not predictable even at very short
time horizons. It is clear that by taking 
$$
\nu_X = \theta(0),
$$
we recover the distribution of $X_T$ described in the preceding
section. 

We then define the \emph{forecast process} by
$$
F_t = \mathbb E[f_{prod}(X_T)|\mathcal F_t]. 
$$
The following proposition gives an explicit form of this process. 
\begin{proposition}
The forecast process is given explicitly by
$$
F_t = g(X_t,\theta(t)),
$$
where
\begin{multline*} g(x,\theta) = \frac{1}{x_{max} - x_{min}} \big[ x (
  \Phi(d^{min}_{+}(x,\theta))-\Phi(d^{max}_{+}(x,\theta)))\\- x_{min}
  \Phi(d^{min}_{-}(x,\theta))  + x_{max} \Phi(d^{max}_{-}(x,\theta))
\big]
\end{multline*}
with
$d_{\pm}^{min,max}(x,\theta) = \frac{1}{\sqrt{\theta}} [ \ln(x/x_{min,
  max}) \pm \theta / 2]$ 
and $\Phi$ is the standard normal distribution function. 
\end{proposition}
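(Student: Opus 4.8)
The plan is to compute the conditional expectation defining $F_t$ by direct integration, which reduces it to two Black--Scholes-type call prices.

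\textbf{Step 1 (reduce to one Gaussian).} Fix $t<T$. By the construction of $X$, conditionally on $\mathcal F_t$ one has $X_T\stackrel{d}{=}X_t\,e^{\sqrt{\theta(t)}\,N-\theta(t)/2}$ with $N\sim N(0,1)$ independent of $\mathcal F_t$, where $\theta(t)=\int_t^T\sigma(s)^2\,ds+b^2$. Since $f_{prod}$ is bounded (indeed $0\le f_{prod}\le 1$), the freezing lemma for conditional expectations under independence gives $F_t=h(X_t,\theta(t))$, where $h(x,\theta):=\mathbb E\big[f_{prod}\big(x\,e^{\sqrt{\theta}N-\theta/2}\big)\big]$, a deterministic Borel function of $(x,\theta)$.

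\textbf{Step 2 (split into call payoffs and evaluate).} Using the identity $f_{prod}(y)=\frac{1}{x_{max}-x_{min}}\big[(y-x_{min})^+-(y-x_{max})^+\big]$ and linearity of expectation,
$$
h(x,\theta)=\frac{1}{x_{max}-x_{min}}\big[C(x,x_{min},\theta)-C(x,x_{max},\theta)\big],\qquad C(x,K,\theta):=\mathbb E\Big[\big(x\,e^{\sqrt{\theta}N-\theta/2}-K\big)^+\Big],
$$
each term being finite because $x\,e^{\sqrt\theta N-\theta/2}$ is integrable. The evaluation of $C$ is the classical computation: $x\,e^{\sqrt\theta N-\theta/2}>K$ if and only if $N>-d_-(x,K,\theta)$ with $d_\pm(x,K,\theta)=\frac{1}{\sqrt\theta}\big(\ln(x/K)\pm\theta/2\big)$; splitting the integral at $-d_-$, using $e^{\sqrt\theta n-\theta/2}\varphi(n)=\varphi(n-\sqrt\theta)$ (with $\varphi=\Phi'$ the standard normal density) together with a change of variable in the resulting integral gives $C(x,K,\theta)=x\,\Phi\big(d_+(x,K,\theta)\big)-K\,\Phi\big(d_-(x,K,\theta)\big)$. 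Substituting $K=x_{min}$ and $K=x_{max}$, so that $d_\pm(x,x_{min},\theta)=d^{min}_\pm(x,\theta)$ and $d_\pm(x,x_{max},\theta)=d^{max}_\pm(x,\theta)$, and collecting the four terms yields $h(x,\theta)=g(x,\theta)$, hence $F_t=g(X_t,\theta(t))$.

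\textbf{On the main difficulty.} There is essentially no analytic obstacle here: Step 2 is just the Black--Scholes call formula with zero interest rate, total variance $\theta$, and the two strikes $x_{min},x_{max}$. The only points requiring a little care are (i) the justification of Step 1 --- the independence of $N$ from $\mathcal F_t$ coming from the definition of the filtration as the natural filtration of $X$, and the measurability of $x\mapsto h(x,\theta)$, both standard --- and (ii) the role of the terminal jump: the formula describes the forecast for $t<T$ (equivalently the càdlàg value $F_{T-}=g(X_{T-},b^2)$), whereas at the delivery date the residual unpredictable component is realized and $F_T=f_{prod}(X_T)$. Finally, the martingale property of $F$ asserted in the text is then automatic from the tower property, since $F_t=\mathbb E[f_{prod}(X_T)\mid\mathcal F_t]$ by definition, and it provides a useful consistency check on the closed-form expression for $g$.
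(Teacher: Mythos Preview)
Your proof is correct: the reduction to a single Gaussian via the freezing lemma and the identification of $h(x,\theta)$ as a call spread evaluated by the Black--Scholes formula are exactly the right steps, and the algebra checks out. The paper in fact states this proposition without proof, so your argument supplies precisely what is missing; there is no alternative approach in the paper to compare against.
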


This model fully describes the evolution of the forecast dynamics,
while ensuring that $F_t(T) \in [0,1]$ for all $t$. For every forecast
horizon, the forecast distribution is parameterized by a single
number, $\theta(t)$. Since the key quantity for determining the optimal
strategy is the forecast error, we fit the function $\theta$ by
matching the empirically observed variances of the
forecasting errors for different horizons with the variances predicted by the model
and given by
\begin{align*}
\mathbb E\left[(F_t-F_T)^2\right] &= \mathbb
E\left[(g(\theta(t),X_t)-f_{prod}(X_T))^2\right]\\
& = \mathbb E\left[f_{prod}(X_T))^2\right]-\mathbb
E\left[g(\theta(t),X_t)^2\right].
%& = \mathbb
%E\left[\int_t^T \frac{\partial g}{\partial x}(\theta(s),X_s)^2
%  \sigma(s)^2 X_s^2 ds\right]
\end{align*}
Computing the second term requires a one-dimensional numerical
integration:
$$
\mathbb
E\left[g(\theta(t),X_t)^2\right] = \frac{1}{\sqrt{2\pi}}\int_{\mathbb R} g\left(\theta(t),
\exp\left(\sqrt{\nu_X^2 - \theta(t)} z - \frac{\nu_X^2 - \theta(t)}{2}\right)\right)^2
e^{-\frac{z^2}{2}} dz. 
$$
As for the first term, it may be evaluated explicitly:
\begin{align*}
\mathbb E\left[f_{prod}(X_T))^2\right] &= \Phi\left(d_-^{max}\right) +
\frac{e^{\frac{\nu_X^2}{2}}}{(x_{max}-x_{min})^2}
\left\{\Phi\left(d^{min}_0\right) - \Phi\left(d^{max}_0\right)
\right\} \\ &- \frac{2 x_{min}}{(x_{max}-x_{min})^2}
\left\{\Phi\left(d^{min}_+\right) - \Phi\left(d^{max}_+\right)
\right\} \\ &+ \frac{x_{min}^2}{(x_{max}-x_{min})^2}
\left\{\Phi\left(d^{min}_-\right) - \Phi\left(d^{max}_-\right)
\right\},
\end{align*}
where 
\begin{align*}
&d_0^{max,min}=\frac{-\log
    x_{max,min} + \frac{3\nu_X^2}{2}}{\nu_X} ,\quad d_\pm^{max,min}=\frac{-\log
    x_{max,min} \pm \frac{\nu_X^2}{2}}{\nu_X}.
\end{align*}

Therefore, for fixed parameters of the realized production
distribution $\nu_X, x_{min}$ and $x_{max}$, the forecast error variance $\mathbb
E\left[(F_t-F_T)^2\right]$ is a function of $\theta(t)$ only. For
$\theta \in [0,\nu_X^2]$, let
$$
\phi(\theta) = \mathbb E\left[f_{prod}(X_T))^2\right]-\frac{1}{\sqrt{2\pi}}\int_{\mathbb R} g\left(\theta,\exp\left(\sqrt{\nu_X^2 - \theta} z - \frac{\nu_X^2 - \theta}{2}\right)\right)^2
e^{-\frac{z^2}{2}} dz .
$$
By Jensen's inequality it can be shown that $\phi(\theta)$ is strictly
increasing in $\theta$. Moreover it is clearly continuous and
satisfies $\phi(0) = 0$ and $\phi(\nu_X^2) = \text{Var}\,
[f_{prod}(X_T)]$. Therefore, for any $v$ in the interval $(0,\text{Var}\,
[f_{prod}(X_T)])$, there exists a unique $\theta$ such that
$\phi(\theta)=v$. We use this property to calibrate the function
$\theta(\cdot)$ non-parametrically to the observed variances of the
forecast errors. 

% \begin{align*}
% g(\theta,x) &= \mathbb E[f_{prod}(x e^{\theta N - \frac{\theta^2}{2}})]
% = \int_\mathbb R f_{prod}(x e^{\theta z - \theta^2/2})
% \frac{e^{-z^2/2}}{\sqrt{2\pi}} dz \\ &= \frac{1}{2\pi}\int_{\mathbb R}
% \hat g(u+iR) \Phi_\theta(-u-iR) x^{R-iu} du 
% \end{align*}

% $$
% \hat g(u) = \int_{\mathbb R} e^{iux} f_{prod}(e^x) dx,\qquad
% \Phi_\theta(u) = e^{-\frac{\theta^2 (u^2+iu)}{2}}
% $$

% $$
% \mathbb
% E\left[g(\theta(t),X_t)^2\right] = \frac{1}{4\pi^2}\mathbb
% E\left[\int_{\mathbb R^2} du dv
% \hat g(u+iR) \Phi_{\theta(t)}(-u-iR)\hat g(v+iR) \Phi_{\theta(t)}(-v-iR) X_t^{2R-iu-iv}  \right]
% $$

% $$
% \hat g(u+iR) = \frac{1}{x_{max}-x_{min}} \left(\frac{x_{min}^{iu+1-R} - x_{max}^{iu+1-R}}{(R-iu)(R-1-iu)}\right)
% $$

Alternatively, one can use a parametric volatility function given by
$$ 
\sigma_t = \sigma_0 e^{ \eta ( T - t) } \mathbf{1}_{ t > T - \tau^*}.
$$
Here, $\tau^*$ is the time horizon for which the forecast error
variance becomes equal to the unconditional variance of the realized
power output, which means that the forecast becomes useless. This
corresponds to the function $\theta(t)$ given by
$$
\theta(t) =\left\{ b^2 + \frac{\sigma^2_0}{2 \eta} \left(  e^{2 \eta (T - t) } - 1
\right)\right\}\wedge \nu_X^2. 
$$

\paragraph{Fitting the model}
% The parameters are estimated using the method of moments by fitting
% the empirical standard deviations of the forecast errors to those
% produced by the model. 

%-----------------------------

% \paragraph
% 	{Estimating the volatility parameters}
% 	The forecast error made at date $t$ is given by 	
% 	$$ Err_t(T) = F_T - F_t(T) = f_{prod}(X_T) - g(X_t, \theta_t)$$

% 	\begin{itemize}
% 		\item Given the volalitily parameters $(\sigma_0, \eta, b)$, generate the residual volatility $(\theta_{t_j})_{1 \leq j \leq M}$. 
% 		\item Sample $N$ independent trajectories of process $(X^n_{t_j})_{1 \leq j \leq M}, n = 1,\ldots, N$;
% 		\item Compute the sampled forecast errors 
% 		$$ Err^n_{t_i}(T) = f_{prod}( X^n_T) - g (X^n_{t_i}, \theta_{t_i}), \quad n = 1,\ldots, N$$
% 		\item Compute the standard deviation $std^{MC}_j, j = 1,\ldots,M$ of the sampled forecast errors $(Err^n_{t_j})_{1 \leq n \leq N}$ as well as the empirical ones $(std^{empi}_j)_{1\leq j \leq M}$.
% 		\item Find the best $(\sigma^*_0, \eta^*, b^*)$ such that
% 		$$ (\sigma^*_0, \eta^*, b^*)= \argmin_{\sigma_0, \eta, b} \sum_{j=1}^M (std^N_j - std^{empi}_j)^2 $$
% 	\end{itemize}
	
% 

%\paragraph{Description of the data set}
We estimate the function $\theta(\cdot)$ in both the non-parametric
and the parametric form using the forecast data provided by Ma\"ia
Eolis. This data set contains the forecasts of the power output at
wind park level, produced by an independent forecasting company, for
the period from December 7th 2011 to March 3rd 2015. In the numerical
examples we focus on the wind park 1 from the three parks
considered in the previous section. The forecasts are
updated every $6$ hours and cover time horizons from 1h15min to 144
hours ahead with 15 minute step. The forecast values are positive, and
in the analysis we normalize them by the rated power of the plant so
that $F_t(T)\in [0,1]$. Figure \ref{forecast.fig} shows examples of
forecasts together with the actual realized production. The forecasts
appear quite precise for short time horizons, but the precision
deteriorates significantly for longer horizons. This is further
confirmed in Figure \ref{forecast_error.fig} which shows the
histograms of the forecast errors for different horizons.

\begin{figure}
\begin{center}
		\includegraphics[width=0.8\linewidth]{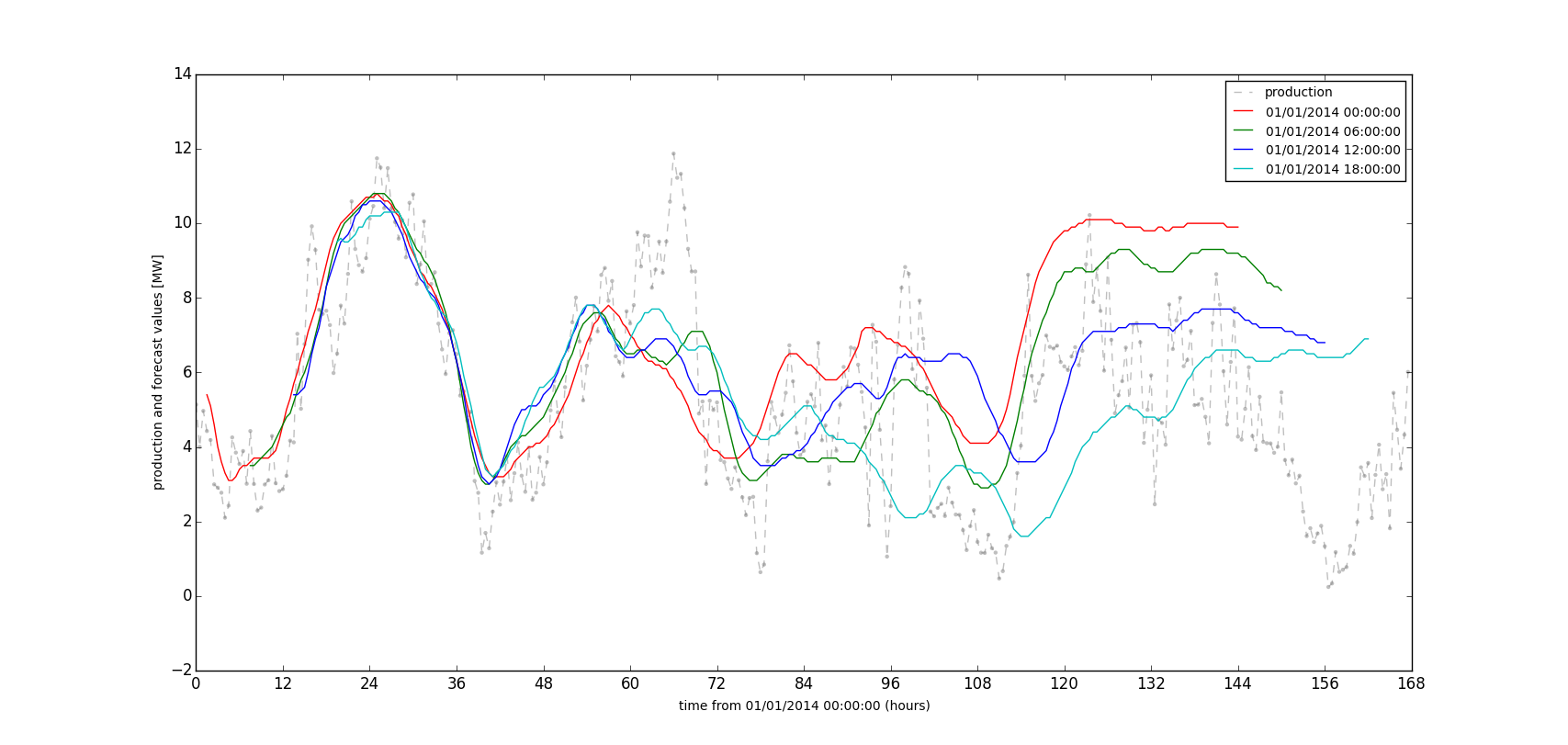}
\end{center}
\caption{Plot of the forecast made at a given date as function of time horizon
together with the realized production for this horizon for four
different starting times (given in the legend).	Accuracy
decreases for longer horizons.  }
\label{forecast.fig}
\end{figure}

\begin{figure}
\begin{center}
		\includegraphics[width=0.6\textwidth]{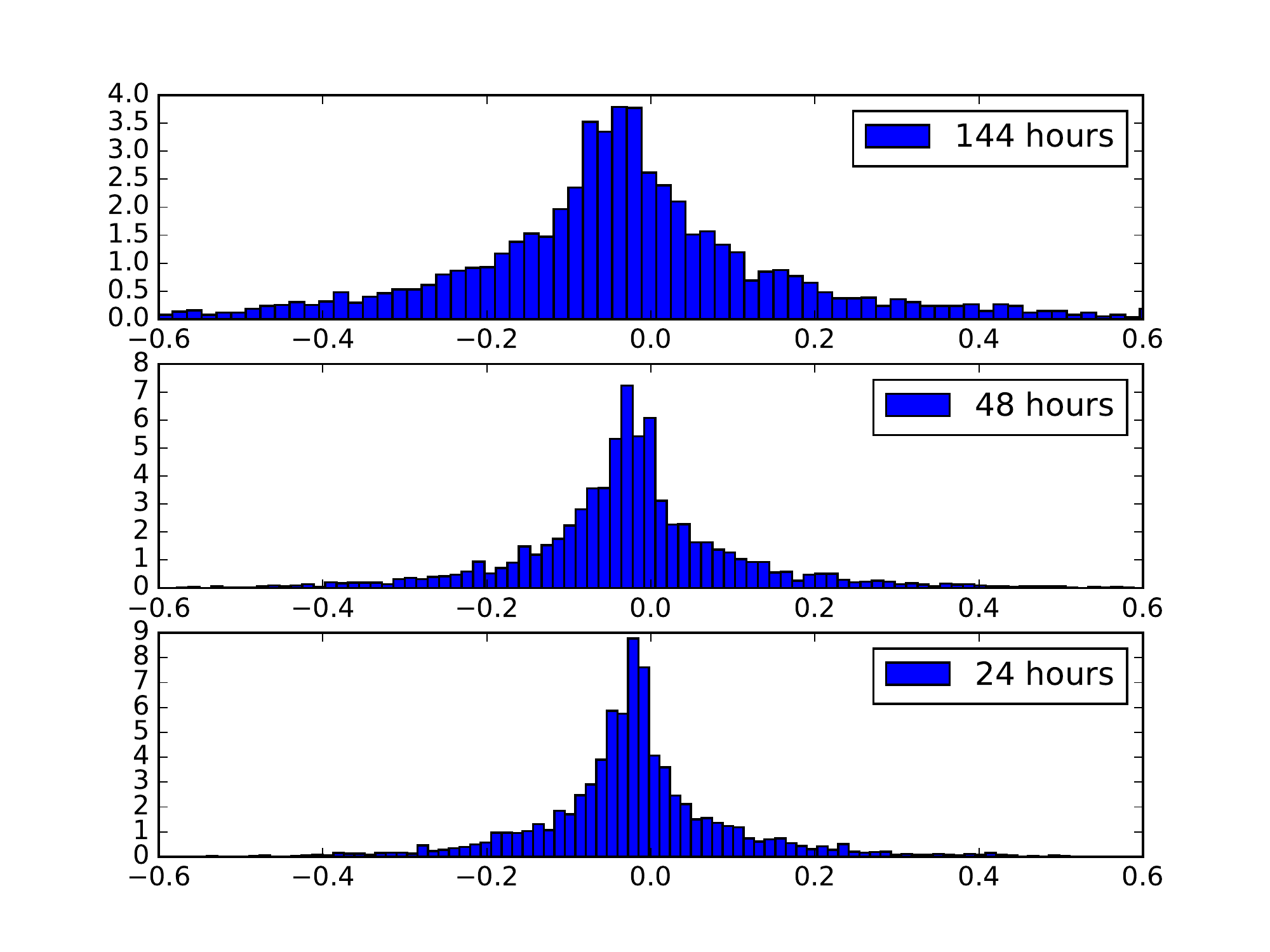}
\end{center}
\caption{Histograms of the forecast error for different time horizons.}
\label{forecast_error.fig}
\end{figure}

Figure \ref{estimation_theta.fig}, left graph, plots the variance of
the forecast error as function of time horizon. More than $\tau^* =120h$
prior to production date the variance of the forecast error exceeds
that of the realized production and we consider that the forecast has
no value.  The right graphs of
this figure shows the function $\theta$ estimated using the
non-parametric method described above and the parametric method
(when all error variances are fitted at the same time by nonlinear
least squares). The estimated parameter values are $\sigma_0 =
0.040113$, $\eta = 0.004423$ and $b = 0.308817$.

Finally, Figure \ref{model_histogram.fig} compares the empirical
distribution of the forecast error with the one generated by the model
for the time horizon of $48$ hours. 

\begin{figure}
\centerline{\includegraphics[width=0.45\textwidth]{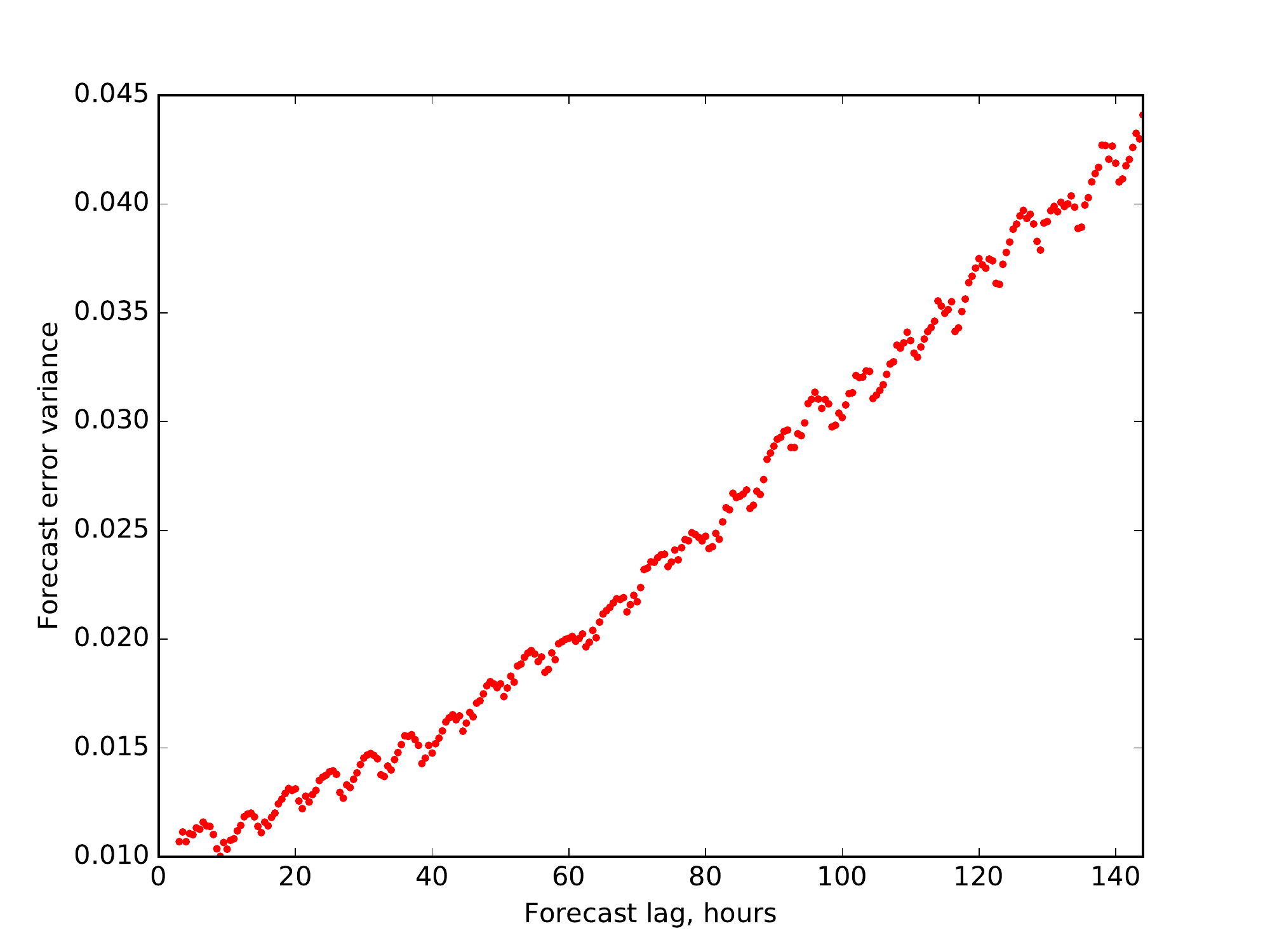}
   \includegraphics[width=.45\linewidth]{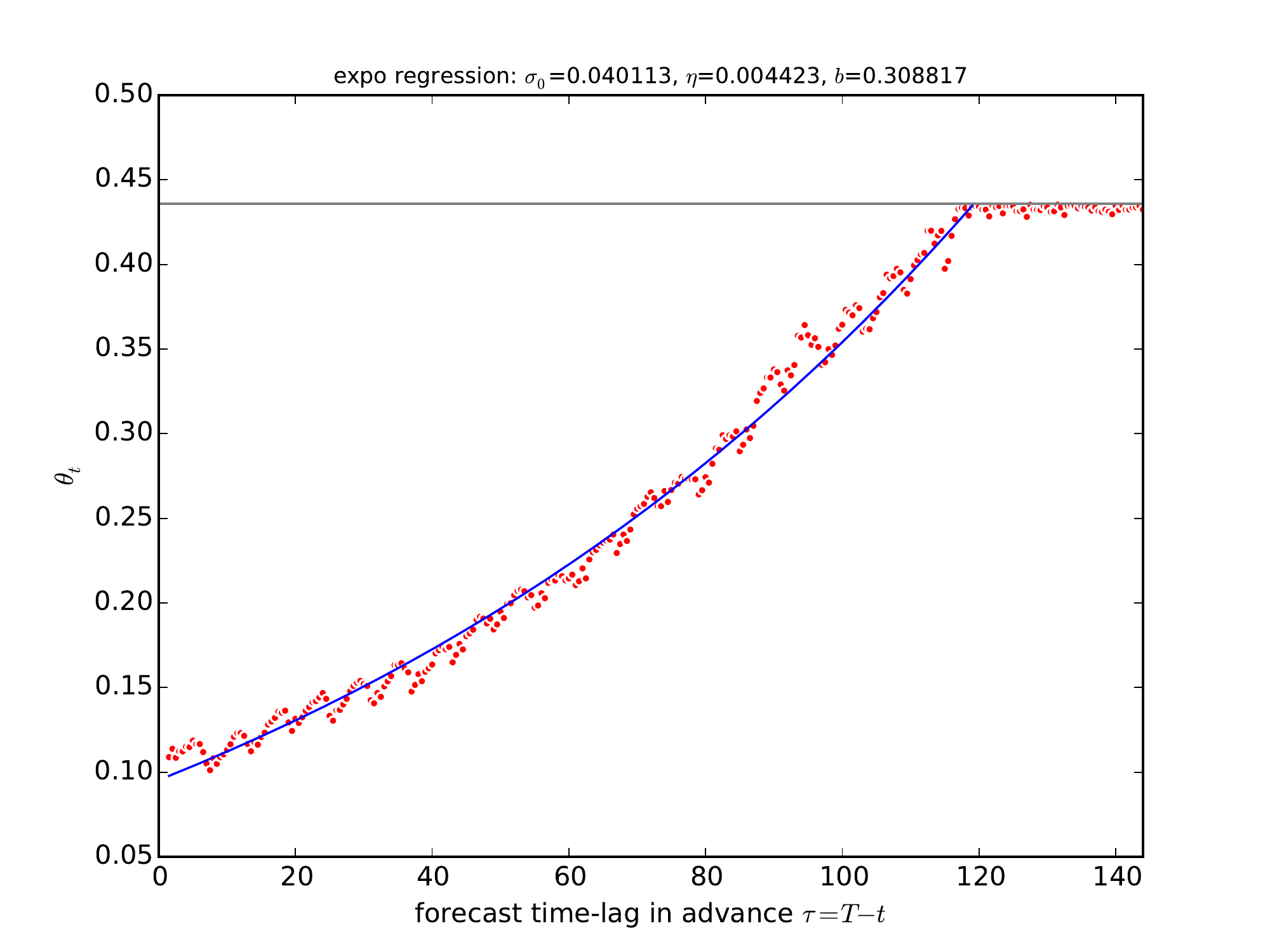}}
\caption{Left: variance of
the forecast error as function of time horizon. Right: function $\theta$
estimated using parametric and nonparametric method. The 6-hour
periodicity is due to the fact that forecast is updated every 6 hours.}
\label{estimation_theta.fig}
\end{figure}

\begin{figure}
					
\centerline{\includegraphics[width=.45\linewidth]{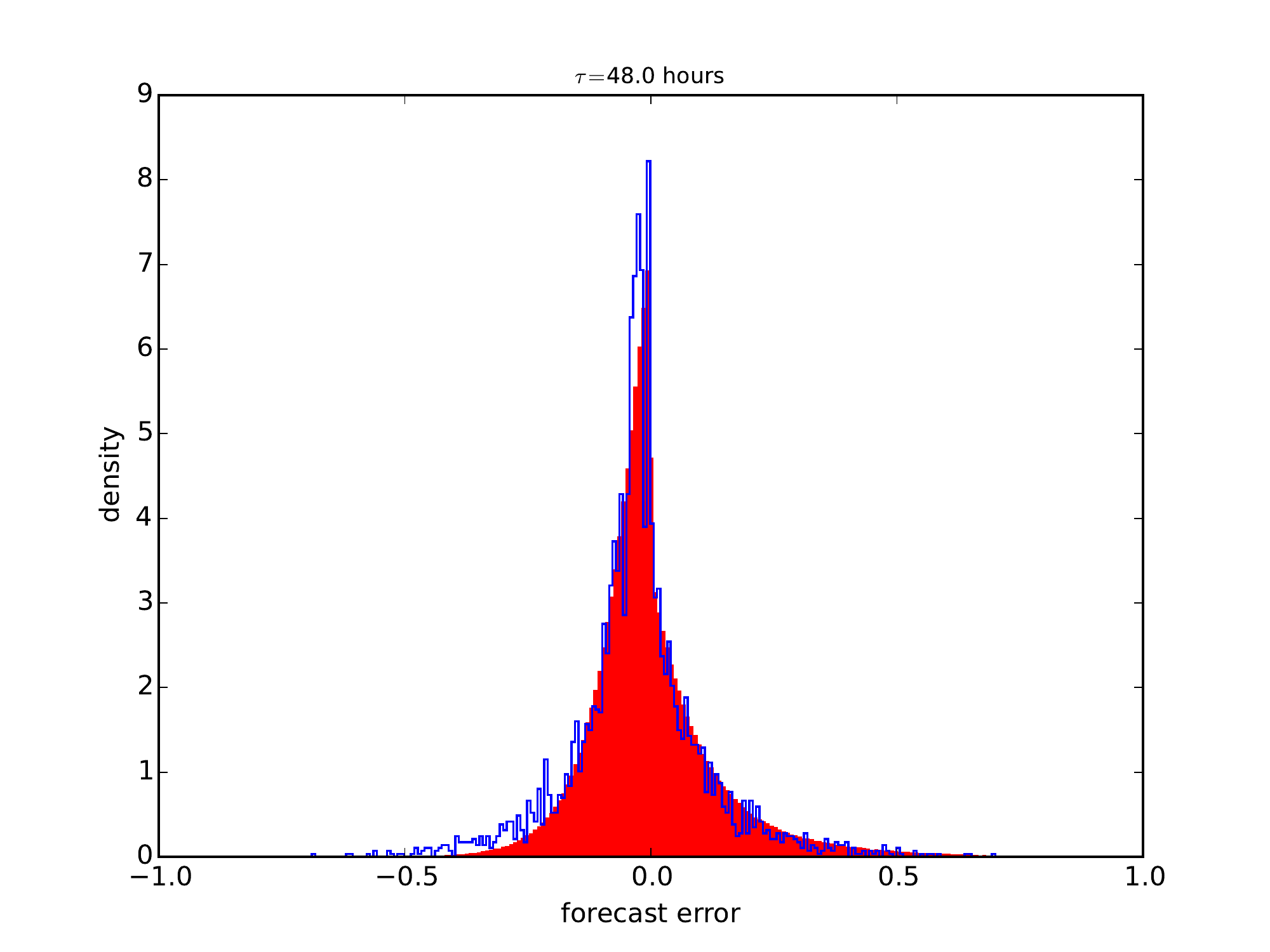}}
\caption{The histogram of the observed forecasting error (blue line) compared to
  the model-generated histogram for the time horizon of $48$ hours
  (red bars).}
  \label{model_histogram.fig}   
\end{figure}

%-------------------------------------------------------------------------------
% Optimization of market intervention
%------------------------------------------------------------------------------
\section{Optimization of market interventions}\label{optimal}
In this section, our aim is to determine the optimal strategies for
selling electricity produced during a short
  time period $[T-\delta,T]$, where $T$ is fixed.  This electricity must be sold in advance, in different markets
  (spot, forward, intraday), otherwise a {penalty} is applied
  for using the adjustment market. We assume that the wind power producer does not know the exact production but
  has {forecasts available.}

There are several reasons for trading both in intraday and
spot/forward markets. First of all, intraday markets are very volatile 
and illiquid: if supply exceeds demand, the prices plunge down and
if demand exceeds supply, the prices shoot up (see Figure
\ref{intraday.fig}).  This means that large amounts of
energy can only be sold at a very low price. For this reason, it is
advantageous to sell in spot / forward markets if the amount of
electricity to be produced is known in advance.
Also, by selling in the forward market, one {reduces
  the risk} associated to the change in the price until the delivery
  date, since forward prices fluctuate less than spot / intraday
  prices. On the other hand, selling in the spot/intraday market {reduces the
penalty} applied for not delivering the right amount since the
forecasts are better when the delivery horizon is close.

\paragraph{Forward price model}
In practice, the forward contracts are traded continuously but cover
an extended delivery period, e.g., one year, one quarter, one month,
one week and sometimes one day. In the spot market, trading takes
place only once per day, and one can make separate bids for each hour
of the following day. The intraday market again allows continuous
trading and the basic contract covers a $15$-minute delivery
period. To simplify the treatment and make our main ideas transparent,
we do not distinguish between different markets, and assume that at
every time $t$, one can enter into a forward contract allowing to buy
/ sell electricity at a future date $T$ at the price $P_t(T)$. Our
methods and results can be easily adapted to a more realistic market
structure. 

Let $\mathbb G:= (\mathcal G_t)_{t\geq 0}$ be the filtration of the
agent selling electricity. 
We assume that the forward sale price process satisfies 
$$
dP_t(T) = \mu_t dt + \beta_t dB_t,
$$
where $\mu$ and $\beta$ are deterministic processes such that
$$
\int_0^T (|\mu_t| + \beta_t^2) dt <\infty
$$ and $B$ is a
$\mathbb G$-Brownian motion. For longer
horizons, the coefficient
$\mu$ reflects the average trend of forward prices as the delivery
horizon draws near. As seen from Figure \ref{future.fig}, this trend
is typically negative, which corresponds to a premium for early
trading. For shorter time horizons the negative coefficient $\mu$ may
reflect the widening of the bid-ask spread in the intraday market.

\paragraph{Volume penalty} We assume that the wind power producer has the obligation to sell all the
produced energy and denote by {$\phi_t$ the aggregate
position} at time $t$ (total quantity to deliver at time $T$ owing to
the contracts entered into prior to date $t$).  The trading starts
at some fixed date $0$. If, at date $T$, $\phi_T \neq F_T$,
the agent must sell / purchase the extra energy at price $P_T:=P_T(T)$, and in
addition pay a penalty equal to $u(F_T - \phi_T)$, where $u(0)=0$,
$u(x)$ is increasing for $x>0$ and decreasing for $x<0$.

\paragraph{Admissible strategies}We are interested in determining the optimal strategies for two kinds
of electricity producers: a small producer whose interventions do not
affect market prices, and a relatively large producers whose trades
may impact the market. The small producer is only selling the electricity and does not engage in
proprietary trading. Therefore, the class $\mathcal A$ of admissible
strategies for a small producer contains all $\mathbb G$-adapted
increasing processes $\phi$ with $\phi_0=0$ satisfying the condition
$$
\mathbb E\left[\int_0^T (\phi_t |\mu_t| + \phi_t^2 \beta^2_t ) dt\right] <\infty.
$$
Indeed, allowing
$\phi$ to both increase and decrease does not make sense in the
absence of market impact since in that case the optimal strategy would
be to sell all produced electricity just before the terminal date.

For
the large producer, following \cite{almgren2001optimal},
we assume that the trading strategy is absolutely continuous and 
introduce a market impact term proportional to the square of the rate of trading $\psi_t = \phi'_t$.
The class  $\mathcal A^{ac}_+$ of admissible strategies for a large
producer who can only sell electricity thus contains all processes in $\mathcal A$ which
are absolutely continuous. Finally, the class $\bar A^{ac}$ of admissible strategies for a large
producer who can both buy and sell electricity contains all
processes of the form $\phi = \phi^+ - \phi^-$ where $\phi^+$ and
$\phi^-$ are in $\mathcal A^{ac}_+$.

\begin{figure}
\centerline{
\includegraphics[width=0.5\textwidth]{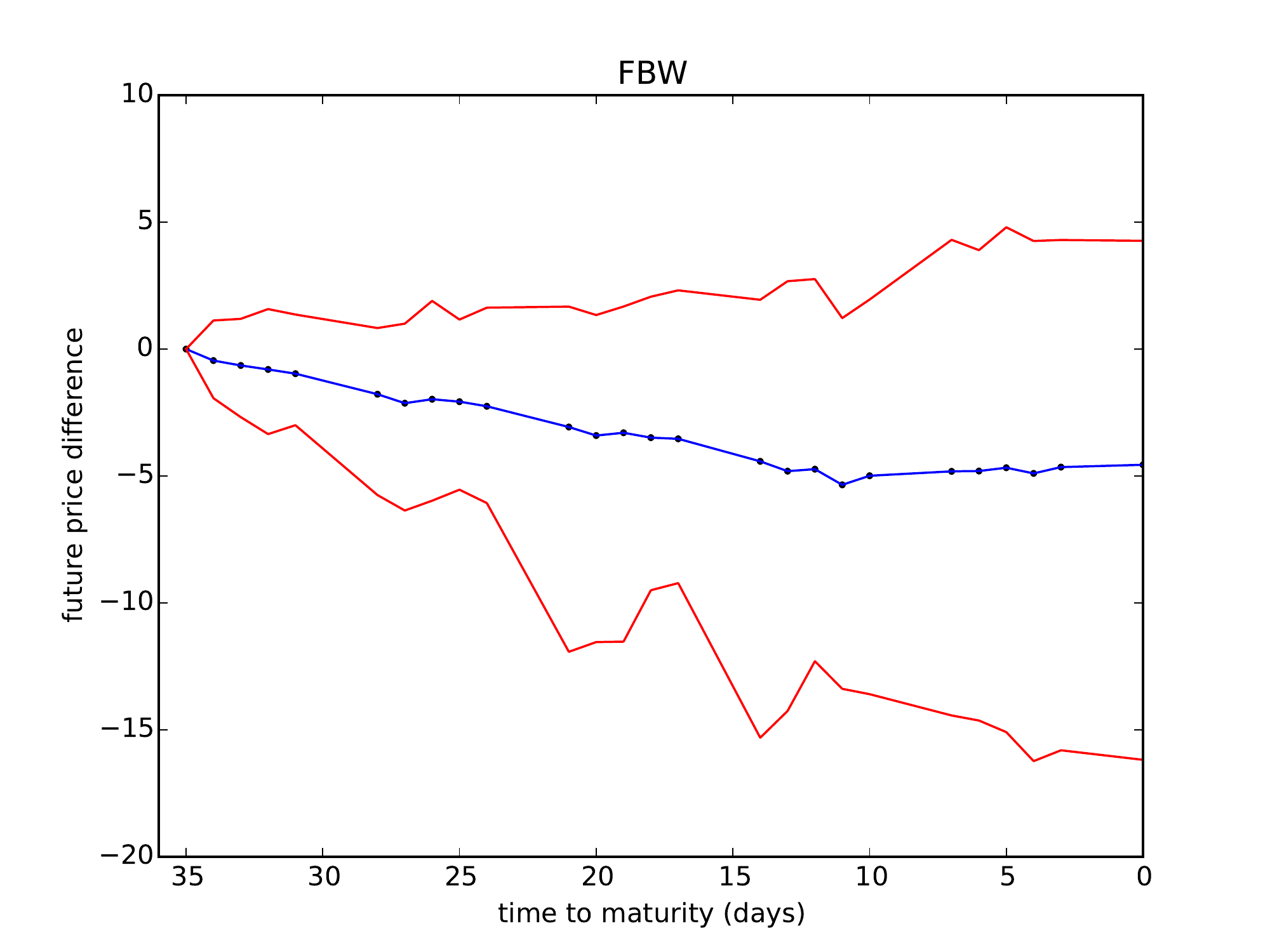}\includegraphics[width=0.5\textwidth]{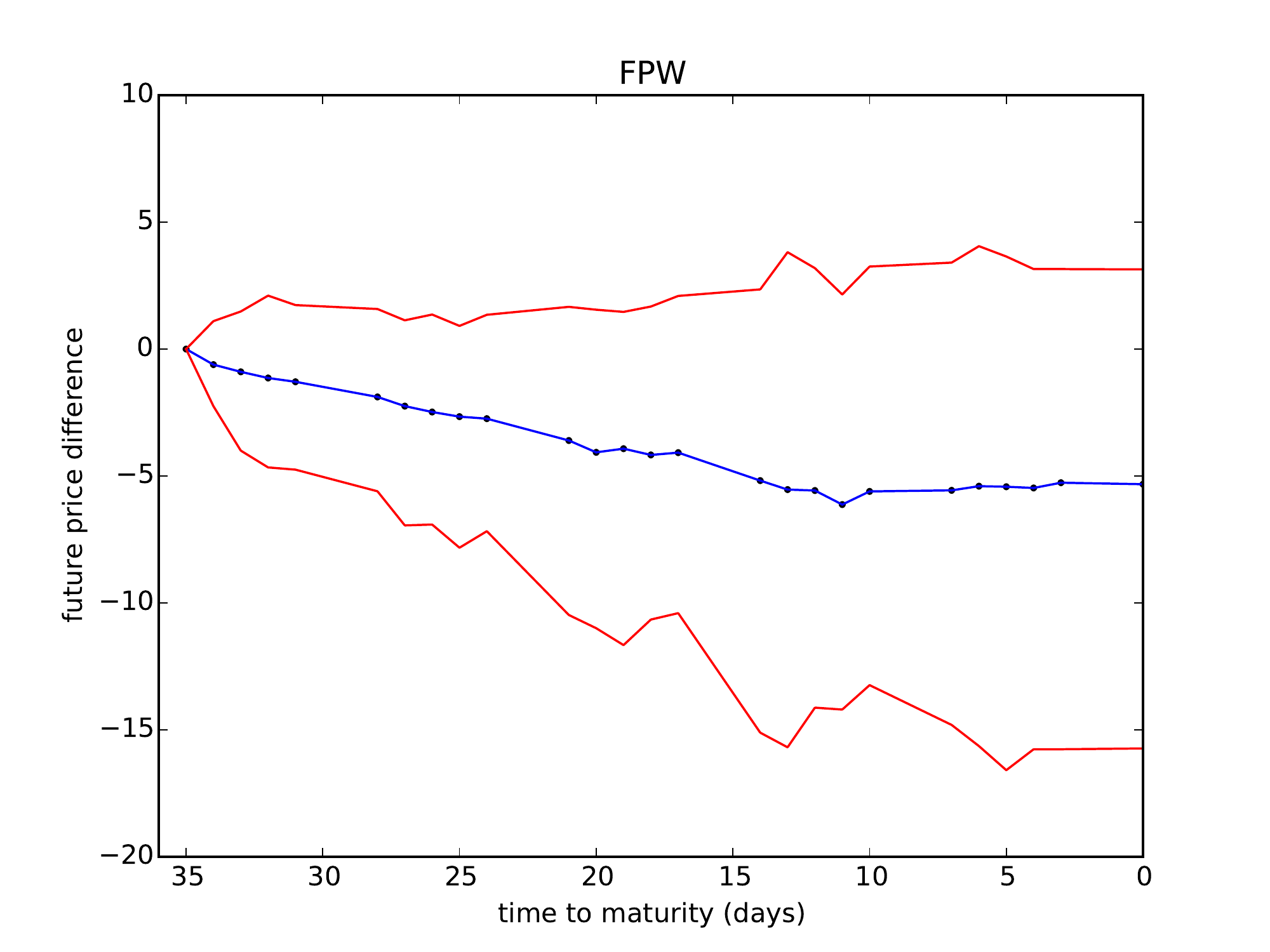}}

\caption{Evolution of future price difference $P_t(T)-P_0(T)$ as function of
$t$, averaged over one year, with $95\%$ confidence bounds. Left: base futures. Right: peak futures.}
\label{future.fig}
\end{figure}

\begin{figure}
\centerline{
\includegraphics[width=0.5\textwidth]{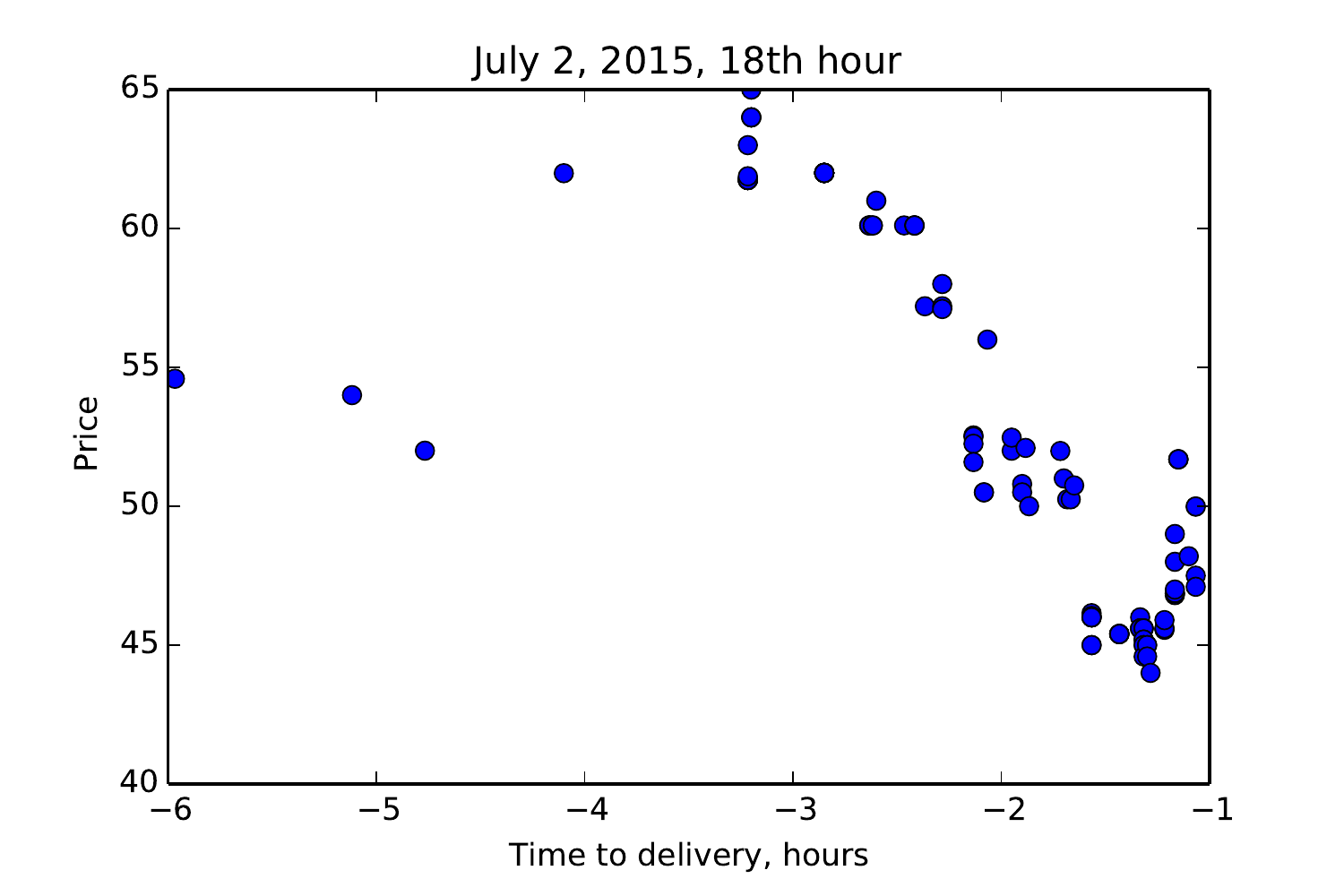}\includegraphics[width=0.5\textwidth]{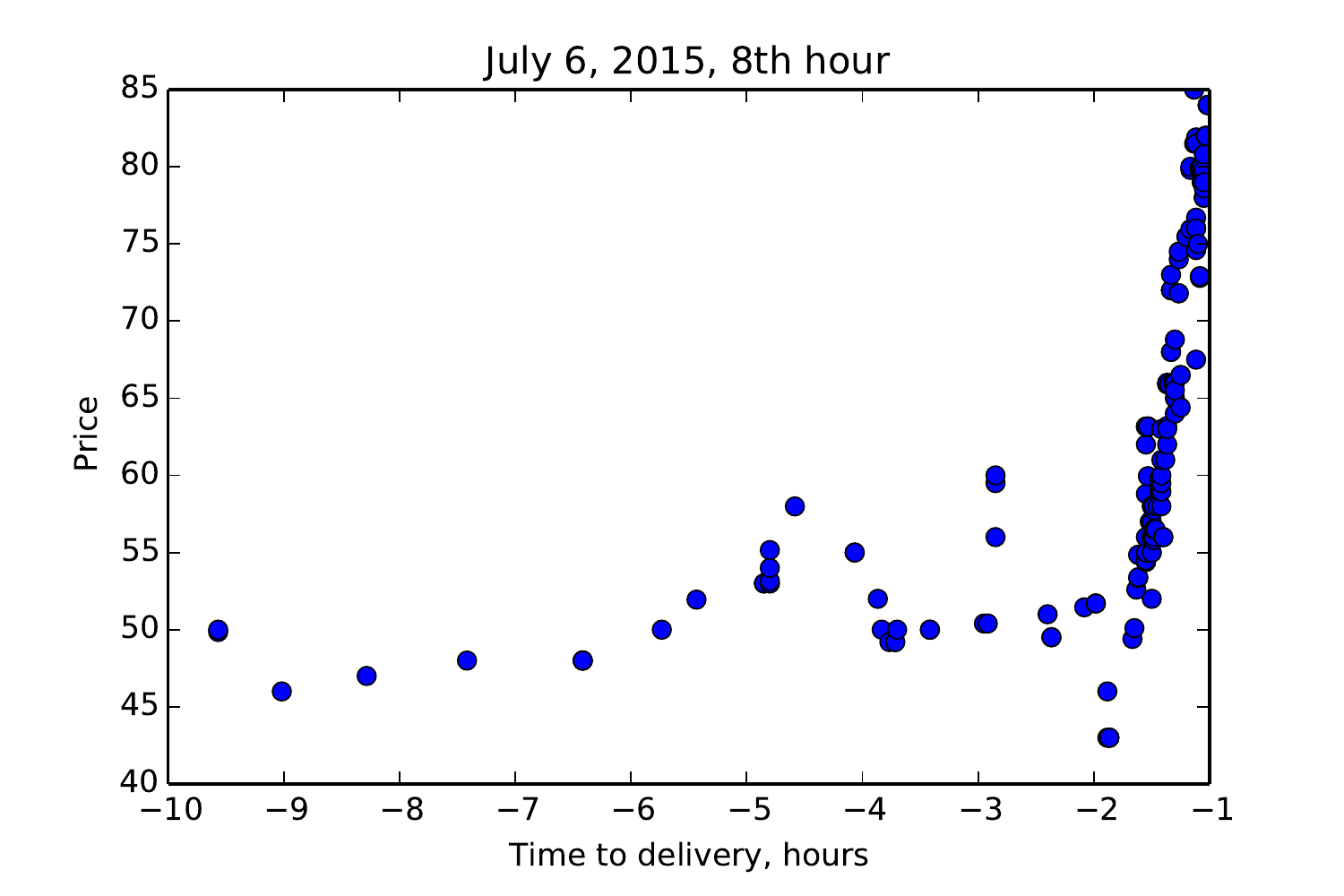}
}
\caption{Intraday transaction prices for a fixed delivery hour. As the delivery
horizon draws close, volatility increases and bid-ask spreads widen.}
\label{intraday.fig}
\end{figure}

\paragraph{Gain from trading}
For a small producer whose transactions do not create market impact, 
the total gain from selling
electricity is modeled (in continuous time) by 
\begin{align*}
G &= F_T P_T - \int_0^T \phi_t dP_t(T) - u(F_T - \phi_T),
\end{align*}
where $\phi \in \mathcal A$. 
For a large producer, with the
penalization by the market impact term, the gain from trading becomes
$$
G = F_T P_T - \int_0^T \phi_t dP_t(T) - u(F_T - \phi_T) -
\frac{\gamma}{2}\int_0^T \psi^2_s ds.
$$
where $\phi\in \mathcal A^{ac}$. In the following sections we consider
separately the problems of maximizing the expected gain with and
without market impact.

% \paragraph{Risk penalty}
% The gain from selling electricity excluding volume penalty and
%   market impact equals
% $$
% P_0 F_T  + \int_0^T (F_T - \phi_t)dP_t(T)
% $$
% One may reduce the {price risk} by penalizing the deviation of
% the gain from the gain of selling immediately in the forward market
% $(P_0 F_T)$. This deviation may be measured by {quadratic variation}
%   (volatility):
% $$
% \int_0^T(F_T - \phi_t)^2 d\langle P(T)\rangle_t = \int_0^T(F_T -
% \phi_t)^2 \beta_t^2 dt. 
% $$
% The effect is similar to the volume penalty, but the constraint
%   is imposed at all dates, with a weighting by the price
%   volatility. With the risk penalty, the optimization problem takes
%   the following form. 
% \begin{align}
% \min_{\phi \in \mathcal A} \mathbb E\Bigg[\underbrace{\int_0^T \phi_t \mu_t
%     dt}_{\text{\parbox{0.22\textwidth}{Expected loss from\\[-0.2cm]
%         trading early (negative)}}} + \underbrace{\frac{\alpha}{2} \int_0^T \left(F_T -
%   \phi_t\right)^2 \beta_t^2 dt}_{\text{\parbox{0.2\textwidth}{Price
%     risk\\[-0.2cm](from selling late)}}} 
% +   \underbrace{u(F_T -
%   \phi_T)}_{\text{volume penalty}}\Bigg]. \label{optprob.risk}
% \end{align}

\subsection{Trading for a small producer in absence of market impact}
The optimization problem for trading in the absence of market impact writes
\begin{align}
\min_{\phi \in \mathcal A} \mathbb E\Bigg[\underbrace{\int_0^T \phi_t \mu_t
    dt}_{\text{\parbox{0.22\textwidth}{Expected loss from\\[-0.2cm]
        trading early (negative)}}} +   \underbrace{u(F_T -
  \phi_T)}_{\text{volume penalty}}\Bigg].\label{optprob}
\end{align}
Note that the stochastic part of
the price process does not play a role in this optimization
problem, and the solution is therefore independent from the price
volatility. One could introduce a risk penalty to account for the
price volatility effects but we do not pursue this here. 

Before obtaining a general solution with numerical methods, we first
present explicit solutions in the cases where the forecast information
is either exact or unavailable.  This will allow us to establish upper
and lower bounds on the expected gain which may be obtained with
probabilistic forecast. 

\paragraph{Exact forecast}
	Assume that the future realized production is known without
        error, in other words, $F_T \in \mathcal G_0$. 
%In this case, the
%        strategy is deterministic and the optimization problem becomes	
%	\begin{equation}
%	\min_{\phi\in \overline{\mathcal A}}  \left[ \int_0^T \phi_t \mu_t dt + u(F_T - \phi_T) \right],\label{exact}
%	\end{equation}
%where $\overline{\mathcal A}$ contains all deterministic increasing functions $\%phi$
%with $\phi_0 = 0$. 
In this case, the optimal strategy is described by the following
proposition, where we define
$$ t^* = \argmin_{0 \leq t \leq T} \int_t^T \mu_s ds, \qquad m^* =
m_{t^*} = \int_{t^*}^T \mu_s ds.$$
\begin{proposition}\label{exact.prop}
Let the penalty function $u$ be convex and continuously
differentiable, with $u'(0)=0$ and $\lim_{x\to-\infty} u'(x) =
-\infty$.
Then the
value function of the the problem \eqref{optprob} is given by 
\begin{align*}
&F_T m^*-v(m^*),\quad &&m^*<0;\\
&-v(0) \equiv  u(0),\quad &&\text{otherwise},
\end{align*}
where $v(y)$ is the Fenchel
transform of $u$: $v(y) = \sup_x \{xy - u(x)\}$. 

Denote by $I(y)$ the inverse function of $u'$. The optimal strategy for the problem \eqref{optprob} is described as
follows. 
\begin{itemize}
\item If $m^*\leq 0$, sell the quantity $\phi = F_T - I(m^*)$ at time
  $t^*$. 
\item If $m^*>0$, sell at time $T$ the amount $F_T$. 
\end{itemize}
\end{proposition}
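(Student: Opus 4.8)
The plan is to reduce \eqref{optprob} to a one-dimensional deterministic convex minimisation over the terminal position $\phi_T$ and then solve it by Fenchel duality. Since $F_T$ is $\mathcal G_0$-measurable and $\mu$ is deterministic, I would work conditionally on $\mathcal G_0$, treating $F_T$ as a fixed constant in $[0,1]$, in particular $F_T\ge 0$. Set $m_t=\int_t^T\mu_s\,ds$, so that $m_T=0$ and $dm_t=-\mu_t\,dt$. For any admissible $\phi$, taken in its right-continuous version (increasing, $\phi_0=0$, $\phi_{0-}=0$), a Stieltjes integration by parts together with the absolute continuity of $m$ gives
$$
\int_0^T\phi_t\mu_t\,dt=-\int_{[0,T]}\phi_{t-}\,dm_t=\int_{[0,T]}m_t\,d\phi_t+\phi_{0-}m_{0-}-\phi_Tm_T=\int_{[0,T]}m_t\,d\phi_t ,
$$
so that the functional in \eqref{optprob} equals $\mathbb E\big[\int_{[0,T]}m_t\,d\phi_t+u(F_T-\phi_T)\big]$, where $\phi_T=\int_{[0,T]}d\phi_t$.

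For the lower bound, since $m_t\ge m^*$ for all $t$ and $d\phi\ge 0$, one has $\int_{[0,T]}m_t\,d\phi_t\ge m^*\phi_T$, hence the functional is bounded below by $\mathbb E\big[m^*\phi_T+u(F_T-\phi_T)\big]$, and it suffices to minimise $x\mapsto m^*x+u(F_T-x)$ over $x\ge 0$ (the constraint coming from $\phi$ being increasing with $\phi_0=0$). Because $m_T=0$ we always have $m^*\le 0$, so there are two cases. If $m^*=0$: convexity of $u$ and $u'(0)=0$ give $u\ge u(0)=0$, so the minimum is $u(0)=-v(0)$, attained at $x=F_T$ (possible since $F_T\ge 0$). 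If $m^*<0$: writing $\psi=F_T-x$, the map equals $m^*F_T+(u(\psi)-m^*\psi)$; by continuity of $u'$, $u'(0)=0$ and $\lim_{x\to-\infty}u'(x)=-\infty$, there is $\psi^*=I(m^*)<0$ with $u'(\psi^*)=m^*$, which minimises the convex function $\psi\mapsto u(\psi)-m^*\psi$, and the constraint $x\ge 0$ is inactive since $\psi^*<0\le F_T$. As $u(\psi^*)-m^*\psi^*=-\sup_\psi\{m^*\psi-u(\psi)\}=-v(m^*)$, the minimum equals $F_Tm^*-v(m^*)$. This yields the claimed lower bounds for the value function.

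It remains to exhibit admissible strategies attaining these values. For $m^*<0$, take $\phi_t=(F_T-I(m^*))\,\mathbf{1}_{\{t\ge t^*\}}$: it is deterministic (hence $\mathbb G$-adapted), increasing because $F_T-I(m^*)>0$, starts at $0$, and is bounded, so it lies in $\mathcal A$; it gives $\int_{[0,T]}m_t\,d\phi_t=m^*(F_T-I(m^*))$ and $u(F_T-\phi_T)=u(I(m^*))$, whose sum is exactly $F_Tm^*-v(m^*)$. For $m^*=0$, take $\phi_t=F_T\,\mathbf{1}_{\{t=T\}}$ (sell the whole production at $T$), which gives $\int_{[0,T]}m_t\,d\phi_t=m_TF_T=0$ and $u(F_T-\phi_T)=u(0)=-v(0)$. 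The points needing care are the validity of the integration-by-parts identity when $\phi$ charges the terminal point $\{T\}$ (handled within the Stieltjes calculus using the right-continuous version of $\phi$ and the continuity of $m$), the degenerate case $t^*=0$ (the candidate is then $\phi_0=0$ followed by an immediate jump, which does not change $\int_0^T\phi_t\mu_t\,dt$), and checking that the candidates satisfy the integrability condition defining $\mathcal A$, which is immediate by boundedness. The genuinely substantive step is the reduction-plus-duality argument of the middle paragraph; I expect the only real, and minor, obstacle to be fixing the right-continuity convention for $\phi$ and correctly accounting for a possible atom at $T$.
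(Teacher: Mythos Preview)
Your proof is correct and follows essentially the same route as the paper's: an integration-by-parts reduction of the objective to $\int_{[0,T]} m_t\,d\phi_t + u(F_T-\phi_T)$, followed by the one-dimensional convex minimisation over the terminal position via Fenchel duality. You are in fact slightly more careful than the paper --- you treat the Stieltjes calculus and the constraint $\phi_T\ge 0$ explicitly, and you observe that $m^*\le 0$ always (since $m_T=0$), which renders the case $m^*>0$ in the statement vacuous.
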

\begin{remark}
Although the realized production is known in advance, sometimes it is
advantageous for the agent to sell more than the realized production,
to be able to benefit from the higher prices in the beginning of
trading. 
\end{remark}
\begin{proof}
We first transform the optimization functional with an integration by parts.
$$
\int_0^T \phi_t \mu_t dt + u(F_T - \phi_T)  = \int_0^T \left(\int_t^T
  \mu_s ds\right) d\phi_t + u(F_T - \phi_T).
$$
It is now clear that for fixed $\phi_T$, the optimal solution
$(\phi_t)_{0\leq t\leq T}$ is such that the measure $d\phi$ is
supported by the single point $t^*$. Therefore, $\phi_t = \phi\mathbf
1_{t\geq t^*}$ with a constant $\phi$. If $m^*>0$, it is optimal to
choose $t^* = T$ and $\phi$ which minimizes
$u(F_T-\phi)$, that is $\phi = F_T$.  Otherwise, $\phi$ can be found by solving
the optimization problem
$$
\min_{\phi\geq 0 }  \left[ \phi m^* + u(F_T - \phi) \right].
$$
The candidate optimizer is given by $\phi = F_T - I(m^*)$. It is
easy to check from our assumption that this quantity is positive,
which means that 
$$
\min_{\phi\geq 0 }  \left[ \phi m^* + u(F_T - \phi) \right] =
\min_{\phi\in \mathbb R }  \left[ \phi m^* + u(F_T - \phi) \right] =
v(m^*) + F_T m^*. 
$$
\end{proof}
\paragraph{Absence of forecast}
	In this case, we assume that the agent does not have access to
        the forecast but only knows the distribution of power
        production, in other words, $(\mathcal G_t)_{0\leq t< T}$ coincides
        with the completed natural filtration of $B$ and $\mathcal
        G_T$ in addition contains $F_T$. The agent's strategy is then deterministic on
        $[0,T)$ with a possible random jump at time $T$ (when the
        realized production becomes known). 
%The optimization problem becomes	
%	\begin{equation}
%	\min_{\phi\in \overline{\mathcal A} }  \left[ \int_0^T \phi_t
%          \mu_t dt + \mathbb{E}\left[ \bar u(F_T - \phi_{T-})\right] \right],
%	\end{equation}
%where $\bar u (x) = u(x)$ if $x\leq 0$ and $\bar u(x) = u(0)$ if
%$x>0$. 
%Alternatively, one can write	
%\begin{equation}
%	\min_{\phi\in \overline{\mathcal A}} \left[ \int_0^T \phi_t \mu_t dt + \tilde{u}(\mathbb{E}[F_T] - \phi_{T-})\right]\label{noforecast}
%	\end{equation}
The optimal strategy is described by the following proposition, where
we define
	$ \tilde{u}(x) = \mathbb{E}\left[\bar u(F_T - \mathbb{E}[F_T]
          + x)\right]$ and $\bar u (x) = u(x)$ if $x\leq 0$ and $\bar u(x) = u(0)$ if
$x>0$. 
\begin{proposition}\label{noforecast.prop}
Let $\mathbb E[|F_T|]<\infty$ and assume that the penalty function $u$
satisfies the assumptions of Proposition \ref{exact.prop} and in addition 
$$
\mathbb E[u(F_T+x)]<\infty \quad \text{and} \quad \mathbb
E[|u'(F_T+x)|]<\infty \quad \forall x \in \mathbb R.
$$
Then the
value function of the the problem \eqref{optprob} is given by 
\begin{align*}
&\mathbb E[F_T] m^* - \tilde v(m^*),&& m^*<0;\\
& u(0), &&\text{otherwise,}
\end{align*}
where $\tilde v(y) = \sup_x\{xy - \tilde u(x)\}$.
The optimal strategy is described as
follows (we denote the inverse function of $\tilde u'$ by $\tilde I$). 
\begin{itemize}
\item If $m^*\leq 0$, sell the quantity $\phi = \mathbb E[F_T] -
  \tilde I(m^*)$ at time
  $t^*$ then sell $F_T - \mathbb E[F_T] +
  \tilde I(m^*)$ (if this quantity is positive) at time $T$. 
\item If $m^*>0$, sell the quantity $F_T$ at time $T$. 
\end{itemize}
\end{proposition}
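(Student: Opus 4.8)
The plan is to reproduce, almost verbatim, the argument of Proposition~\ref{exact.prop}, but with the penalty $u$ replaced by the effective penalty $\tilde u$ (which already folds in the optimal last-minute adjustment at $T$ and the averaging over $F_T$) and with $F_T$ replaced by $\mathbb E[F_T]$. The first step is to reduce to strategies that are deterministic on $[0,T)$, as the text already indicates is legitimate under this information structure. Writing $\phi_{T^-}$ for the position held just before $T$ and $\phi_T=\phi_{T^-}+\Delta$ with $\Delta\ge 0$ the jump at $T$ (possibly a function of $F_T$, since $F_T\in\mathcal G_T$): for a fixed $q=\phi_{T^-}$ the best terminal move is $\min_{\Delta\ge 0}\mathbb E[u(F_T-q-\Delta)]=\mathbb E[\bar u(F_T-q)]$, attained at $\Delta=(F_T-q)^+$, because $u$ is convex with $u'(0)=0$ so $u(\,\cdot-q-\Delta)$ is minimized at $\Delta=F_T-q$ when that is nonnegative and at $\Delta=0$ otherwise. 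Since $\bar u$ is convex (its two pieces meet in a $C^1$ fashion at $0$ as $u'(0)=0$), the map $q\mapsto \mathbb E[\bar u(F_T-q)]=\tilde u(\mathbb E[F_T]-q)$ is convex; as the drift term satisfies $\mathbb E[\int_0^T\phi_t\mu_t\,dt]=\int_0^T\mathbb E[\phi_t]\,\mu_t\,dt$ and hence depends only on $t\mapsto\mathbb E[\phi_t]$, Jensen's inequality shows that replacing $\phi$ on $[0,T)$ by its (deterministic, increasing) mean does not increase the objective.

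Next, exactly as in Proposition~\ref{exact.prop}, integration by parts gives $\int_0^T\phi_t\mu_t\,dt=\int_{[0,T)}\big(\int_t^T\mu_s\,ds\big)\,d\phi_t$, the jump at $T$ being costless because $\int_T^T\mu_s\,ds=0$; so for a fixed terminal value $q=\phi_{T^-}\ge 0$ the drift term is minimized by concentrating the measure $d\phi$ at $t^*$, yielding $q\,m^*$. The problem collapses to the one‑dimensional convex problem $\min_{q\ge 0}\{q\,m^*+\tilde u(\mathbb E[F_T]-q)\}$. Substituting $y=\mathbb E[F_T]-q$ rewrites this as $\mathbb E[F_T]\,m^*-\sup_{y\le \mathbb E[F_T]}\{ym^*-\tilde u(y)\}$, so the only thing left to check is that the constraint $y\le\mathbb E[F_T]$ is slack when $m^*\le 0$, which makes the supremum equal to the full Fenchel transform $\tilde v(m^*)$.

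To that end I would record the properties of $\tilde u$. Differentiating under the expectation—justified by the assumption $\mathbb E[|u'(F_T+x)|]<\infty$ and the monotonicity of $\bar u'$, which provides an integrable local dominating function—one gets $\tilde u\in C^1$ with $\tilde u'(x)=\mathbb E[\bar u'(F_T-\mathbb E[F_T]+x)]$ and $\bar u'(z)=u'(z)\mathbf 1_{\{z\le 0\}}\le 0$. Hence $\tilde u'$ is continuous, nondecreasing and nonpositive, $\tilde u'(x)\to-\infty$ as $x\to-\infty$ (monotone convergence, using $\lim_{x\to-\infty}u'(x)=-\infty$), and $\tilde u'(\mathbb E[F_T])=\mathbb E[\bar u'(F_T)]=0$ since $F_T\ge 0$ a.s. Therefore, for $m^*<0$ there is a solution $y^*=\tilde I(m^*)<\mathbb E[F_T]$ of $\tilde u'(y)=m^*$, the supremum is attained there, the constraint is slack, and the value is $\mathbb E[F_T]\,m^*-\tilde v(m^*)$ with optimal pre‑$T$ sale $q^*=\mathbb E[F_T]-\tilde I(m^*)\ge 0$ at $t^*$ and terminal jump $(F_T-\mathbb E[F_T]+\tilde I(m^*))^+$. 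In the remaining case $m^*=0$ (note $m^*\le 0$ always, because $\int_T^T\mu_s\,ds=0$), the objective is $\min_{q\ge 0}\tilde u(\mathbb E[F_T]-q)$; since $\tilde u$ is nonincreasing this is attained at $q=0$, with value $\tilde u(\mathbb E[F_T])=\mathbb E[\bar u(F_T)]=u(0)$, i.e.\ sell nothing before $T$ and the whole production $F_T$ at $T$.

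The one genuinely new ingredient compared with Proposition~\ref{exact.prop} is the interchange of the outer minimization over strategies with the averaging over $F_T$: the crux is to verify that $q\mapsto\mathbb E[\bar u(F_T-q)]$ is the correct convex surrogate penalty once the optimal $F_T$‑measurable terminal adjustment has been substituted, so that the reduction to deterministic strategies via Jensen is valid and the constrained Fenchel problem matches the unconstrained one. Everything else is a routine transcription of the convex‑duality computation of Proposition~\ref{exact.prop}, together with the standard integrability verifications needed to differentiate $\tilde u$ under the expectation.
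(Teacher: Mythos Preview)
Your proposal is correct and follows essentially the same route as the paper's proof: reduce to the one-dimensional problem $\min_{\phi\ge 0}\{\phi\,m^* + \tilde u(\mathbb E[F_T]-\phi)\}$ and identify the minimizer via the Fenchel transform, exactly mirroring Proposition~\ref{exact.prop}. You supply more detail than the paper does---in particular the Jensen argument justifying the reduction to deterministic strategies on $[0,T)$, the explicit derivation of the optimal terminal jump $(F_T-q)^+$, and the verification that $\tilde u'(\mathbb E[F_T])=0$ (hence the constraint $q\ge 0$ is slack)---but these are precisely the steps the paper leaves implicit when it writes ``similarly to the proof of Proposition~\ref{exact.prop}'' and ``from our assumptions it follows that $\phi$ is nonnegative.''
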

\begin{proof}
Using the assumptions on $u$, by the dominated convergence theorem, we
can show that $\tilde u$ is convex and continuously differentiable. 
Similarly to the proof of Proposition \ref{exact.prop}, we find that
when $m^*<0$, 
the optimal strategy has the form 
$$
\phi \mathbf 1_{t\geq t^*} + (F_T - \phi)^+ \mathbf 1_{t\geq T},
$$
where $\phi$ is found by solving the optimization problem
$$
\min_{\phi\geq 0} \{\phi m^* + \tilde u(\mathbb E[F_T] - \phi)\}.
$$
The candidate optimizer is given by $\phi = \mathbb E[F_T] - \tilde
I(m^*)$. From our assumptions it follows that $\phi$ is nonnegative,
and therefore 
$$
\min_{\phi\geq 0} \{\phi m^* + \tilde u(\mathbb E[F_T] - \phi)\} =
\min_{\phi\in \mathbb R} \{\phi m^* + \tilde u(\mathbb E[F_T] - \phi)\}=\mathbb E[F_T] m^* - \tilde v(m^*).
$$
\end{proof}

% \begin{remark}
% 	Comparing the situation with the exact forecast and without the forecast information, the extra gain due to having complete information to having no information at all is
% 	$$ \Delta G = v(m^*) - \tilde{v}(m^*)$$
% 	where $v$ and $\tilde{v}$ are Fenchel transform of $u$ and $\tilde{u}$.
	
% 	This difference is positive whenever the function $u$ is convex because by Jensen's inequality  $\tilde{u}(x) \geq u(x)$ for all $x$ and the Fenchel transform is decreasing in $u$.\\
% \end{remark}
\begin{example}	
Assume that the penalty is quadratic, that is, $u(x) =
\frac{\kappa}{2} x^2 $ and $\bar u(x) = \frac{\kappa}{2} x^2
\mathbf{1}_{x < 0}$, and the realized production $F_T$ is uniformly distributed on $[0,1]$. Then 
		\begin{equation*}
		\tilde{u}(x) = \frac{\kappa}{6}\left( \frac{1}{2} - x \right)^3 \mathbf{1}_{-\frac{1}{2} \leq x \leq \frac{1}{2}} + \frac{\kappa}{2} \left( x^2 + \frac{1}{12} \right) \mathbf{1}_{x < -\frac{1}{2}}
		\end{equation*}
		and thus
		\begin{equation*}
		\tilde{I}(z) = \left\{
		\begin{array}{ll}
		\frac{z}{\kappa} & z < -\frac{\kappa}{2} \\
		\frac{1}{2} - \sqrt{- \frac{2z}{\kappa} } & z \geq -\frac{\kappa}{2}
		\end{array}
		\right.
		\end{equation*}
		
%		Hence, if $\mu_s \leq 0, \forall s\in[0,T]$, then 
%		$m^* = \int_0^T \mu_s ds = \mathbb{E}[P_T(T) -P_0(T)]$. And moreover, if $m^* \leq -\frac{\kappa}{8}$, then $\tilde{I}(m^*) \leq 0$, one sells more than the expected production in the market, otherwise one sells less than the expected production.
\end{example}	

\paragraph{Discrete forecast updates}
In this paragraph we consider the more realistic situation when the
forecast is updated at a finite set of deterministic times $0=t_0<
t_1< \dots < t_n = T$, that is, 
$$F_t = \sum_{i=0}^{n-1} F_i \mathbf
1_{t_i \leq t< t_{i+1}} + F_n \mathbf 1_{t_n \leq t},
$$ 
where $(F_k)$ is a discrete-time martingale with respect to the discrete-time filtration $\mathcal F_k =
\sigma(F_i,0\leq i\leq k)$. 
 Moreover, we make the assumption that $\mu_t
\leq 0$ for $t\in [0,T]$, that is, the expected price may only fall as
the delivery date approaches. We denote 
$$
m_k = \int_{t_k}^{t_{k+1}} \mu_s ds. 
$$ 
The optimal strategy is described by the
following proposition. 
\begin{proposition}
Let the penalty function $u$ satisfy the assumptions of Proposition
\ref{noforecast.prop} and assume in addition that 
$$
\mathbb E[|u'(x-c(F_0+\dots+F_n))|]<\infty
$$ 
for all $x\in \mathbb R$ and some constant $c>1$. 
Then there exists a discrete-time $(\mathcal F_k)$-adapted process
$(\xi_k)_{0\leq k \leq n}$ such that 
\begin{align}
\sum_{i=k}^{n-1}m_i = \mathbb E[u'(F_n - \max_{k\leq i \leq n}
\xi_i)|\mathcal F_k]\label{bek.discrete}
\end{align}
for $k=0,\dots,n$. The optimal trading strategy is given by 
\begin{align}
\phi_t = \sum_{i=0}^{n-1} \phi_i \mathbf
1_{t_i \leq t< t_{i+1}} + \phi_n \mathbf 1_{t_n \leq t},\label{optstrat}
\end{align}
where 
$$
\phi_k = \max_{0\leq i\leq k} \xi_i
$$
for $0\leq k \leq n$. 
\end{proposition}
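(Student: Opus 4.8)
The plan is to treat this as a convex, \emph{irreversible} (increasing-only) allocation problem, solve it via a discrete-time analogue of the Bank--El Karoui stochastic representation theorem, and then verify optimality by a convexity / complementary-slackness argument. First I would reduce to discrete time: since $\mathbb G$ is the filtration generated by the piecewise-constant forecast $(F_k)$, every admissible strategy is of the form \eqref{optstrat} with an increasing $(\mathcal F_k)$-adapted sequence $(\phi_k)_{0\le k\le n}$, $\phi_0\ge 0$. Setting $m_k=\int_{t_k}^{t_{k+1}}\mu_s\,ds\le 0$, $M_k=\sum_{i=k}^{n-1}m_i=\int_{t_k}^{T}\mu_s\,ds$ and $\Delta\phi_k=\phi_k-\phi_{k-1}$ with $\phi_{-1}:=0$ (so $\Delta\phi_k\ge 0$), an Abel summation rewrites the objective of \eqref{optprob} as $J(\phi):=\mathbb E\big[\sum_{k=0}^{n}\Delta\phi_k M_k+u(F_n-\phi_n)\big]$, to be minimized over increasing adapted $(\phi_k)$.

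Next I would construct the process $\xi$ in \eqref{bek.discrete} by backward induction on $k$. For $k=n$, \eqref{bek.discrete} reads $u'(F_n-\xi_n)=0$, whence $\xi_n=F_n$. Given $\xi_{k+1},\dots,\xi_n$, put $R_{k+1}=\max_{k+1\le i\le n}\xi_i$ and consider $h_k(x):=\mathbb E[\,u'(F_n-(x\vee R_{k+1}))\mid\mathcal F_k\,]$; this is continuous, nonincreasing in $x$, tends to $-\infty$ as $x\to+\infty$ (because $\lim_{y\to-\infty}u'(y)=-\infty$), and tends to $\mathbb E[u'(F_n-R_{k+1})\mid\mathcal F_k]=M_{k+1}\ge M_k$ as $x\to-\infty$. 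Hence there is an $\mathcal F_k$-measurable (possibly $-\infty$) value $\xi_k$ with $h_k(\xi_k)=M_k$, obtained by a measurable selection; this is precisely the discrete analogue of the Bank--El Karoui representation, and the hypotheses $u'(0)=0$, $\lim_{x\to-\infty}u'(x)=-\infty$ and $m_k\le 0$ are exactly what makes it run. The integrability assumption on $u'$ makes all the conditional expectations finite and provides an a priori bound $\xi_k\le c(F_0+\cdots+F_n)$, which guarantees that $\phi^*_k:=\max_{0\le i\le k}\xi_i$ is admissible.

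Then I would verify that $\phi^*$ is optimal. Write $\Xi_k=\max_{k\le i\le n}\xi_i$, so that $\Xi_k\le\phi^*_n$ for every $k$, and record the complementary-slackness identity: on $\{\Delta\phi^*_k>0\}$ one has $\xi_k=\phi^*_k>\phi^*_{k-1}$, so $\phi^*_n=\max_{k\le i\le n}\xi_i=\Xi_k$. For any admissible $\phi$, using \eqref{bek.discrete} and $\mathcal F_k$-measurability of $\Delta\phi_k$ gives $\mathbb E[\sum_k\Delta\phi_k M_k]=\mathbb E[\sum_k\Delta\phi_k u'(F_n-\Xi_k)]$; subtracting the same identity for $\phi^*$, bounding $u(F_n-\phi_n)-u(F_n-\phi^*_n)\ge -u'(F_n-\phi^*_n)(\phi_n-\phi^*_n)$ by convexity of $u$, and using $\phi_n-\phi^*_n=\sum_k(\Delta\phi_k-\Delta\phi^*_k)$ yields $J(\phi)-J(\phi^*)\ge\mathbb E\big[\sum_k(\Delta\phi_k-\Delta\phi^*_k)\big(u'(F_n-\Xi_k)-u'(F_n-\phi^*_n)\big)\big]$. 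The $\Delta\phi^*_k$ contributions vanish by complementary slackness, and each $\Delta\phi_k$ contribution is nonnegative since $\Delta\phi_k\ge 0$, $\Xi_k\le\phi^*_n$ and $u'$ is nondecreasing; hence $J(\phi)\ge J(\phi^*)$, with equality at $\phi^*$.

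I expect the main obstacle to be seeing that this verification inequality closes: that $\Xi_k\le\phi^*_n$ together with the complementary-slackness identity are exactly the ingredients needed, and that $m_k\le 0$ is indispensable both for the existence of $\xi$ in \eqref{bek.discrete} and for restricting attention to increasing strategies. Most of the remaining effort is bookkeeping: justifying the dominated-convergence and measurable-selection steps in the construction of $\xi$, deriving the a priori bound $\xi_k\le c(F_0+\cdots+F_n)$ (this is where the constant $c>1$ in the hypothesis enters) so that $\phi^*$ is admissible and the interchanges of sum, expectation and conditioning above are legitimate, and checking $\phi^*_0=\xi_0\ge 0$ (if necessary replacing $\xi_0$ by $\xi_0\vee 0$ in the degenerate case $m_0=0$).
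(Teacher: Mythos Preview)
Your proposal is correct and follows essentially the same approach as the paper: backward induction to construct $(\xi_k)$ satisfying \eqref{bek.discrete}, followed by a convexity/complementary-slackness verification that $\phi^*_k=\max_{0\le i\le k}\xi_i$ is optimal. The only cosmetic differences are that the paper proves $\xi_k\ge 0$ directly in the induction (using $f_m(0)=M_m$ via the tower property, which your argument also yields since $R_{k+1}\ge \xi_n=F_n\ge 0$), derives the sharper a~priori bound $\xi_k\le cF_k-I\bigl(\tfrac{c}{c-1}M_k\bigr)$ via a Markov-type inequality rather than your coarser $\xi_k\le c(F_0+\cdots+F_n)$, and organizes the verification by comparing $M_k$ with $\mathbb E[u'(F_n-\phi^*_n)\mid\mathcal F_k]$ rather than first substituting $M_k=\mathbb E[u'(F_n-\Xi_k)\mid\mathcal F_k]$; these are equivalent rearrangements of the same inequality.
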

\begin{remark}
The process $(\xi_k)$ may be computed by backward induction. This
proposition can be extended to the continuous-time case using the
results of \cite{bank2004stochastic}, following, e.g.,
\cite{chiarolla2014identifying}. However, the discrete-time case is more
relevant in practice, since the forecasts are updated in discrete
time. In addition, for numerical computations time must be discretized
anyway. For this reason we concentrate on the discrete case in this
paper. 
\end{remark}
\begin{proof}
We first prove the existence of the process $(\xi_k)$ by an induction argument. Clearly, one
may choose $\xi_n = F_n$. Assume now that for some $m\leq n$, we have
constructed a process $(\xi_k)_{m\leq k\leq n}$ satisfying
\eqref{bek.discrete} for $m\leq k \leq n$, and such that in addition
\begin{align}
0\leq \xi_k \leq c F_k  - I \left(\frac{c}{c-1}\sum_{i=k}^{n-1}m_i\right) \label{xiest}
\end{align}
for $m\leq k \leq n$. 

Consider a random function
$$
\xi \mapsto f_{m}(\xi) = \mathbb E[u'(F_n - \max(\xi, \max_{m\leq i \leq n}
\xi_i))|\mathcal F_{m-1}]. 
$$
This function is well defined and a.s.~continuous for all $\xi \in \mathbb R$ by
assumptions of the proposition and estimate \eqref{xiest}. Remark that 
$$
\lim_{\xi \to \infty} f_m(\xi) = -\infty
$$
and by the induction hypothesis,
$$
f_m(0) = \sum_{i=m}^{n-1} m_i. 
$$
Therefore, there exists $\xi_m \in \mathcal F_m$ with $\xi_m\geq 0$, which solves the
equation $f_m(\xi) = \sum_{i=m-1}^{n-1}m_i$. Moreover, by Markov inequality,
\begin{align*}
&\mathbb E[u'(F_n - \max(\xi, \max_{m\leq i \leq n}
\xi_i))|\mathcal F_{m-1}]\leq \mathbb E[u'(F_n -
\max(\xi,F_n))|\mathcal F_{m-1}]\\
& \leq \mathbb E[u'(\min(F^n - \xi,0))\mathbf 1_{F_n \leq c \mathbb
  E[F_n|\mathcal F_{m-1}]}|\mathcal F_{m-1}]\\
& \leq \frac{c-1}{c} u' (\min(c\mathbb E[F_n|\mathcal F_{m-1}] -
\xi,0) ) \leq \frac{c-1}{c} u' (c\mathbb E[F_n|\mathcal F_{m-1}] -
\xi ).
\end{align*}
This shows that
$$
\xi_{m-1}\leq c\mathbb E[F_n|\mathcal F_{m-1}] - I \left(\frac{c}{c-1}\sum_{i=m-1}^{n-1}m_i\right).
$$

It remains to show the optimality of the proposed strategy. First,
note that since $\mu_t\leq 0$ for all $t\in [0,T]$, with each interval
it is optimal to sell electricity as early as possible, so that the
optimal strategy has the form \eqref{optstrat}. Therefore, we need to
minimize the discrete-time version of the objective function
$$
 J(\phi) =\mathbb E\left[\sum_{k=0}^{n-1} \phi_k m_k + u(F_n -
   \phi_n)\right]
$$
over all increasing adapted discrete-time processes $(\phi_k)_{0\leq k \leq
  n}$. Let $\phi^*_k = \max_{0\leq i\leq k}\xi_i$ and let $(\phi_k)$ be
any other admissible strategy. We denote $\Delta \phi_i = \phi_i -
\phi_{i-1}$ for $i=1,\dots,n$ and $\Delta \phi_0 = \phi_0$, and
similarly for $\Delta \phi^*_i$. Then,
\begin{align*}
J(\phi) - J(\phi^*)  &= \mathbb E\left[\sum_{k=0}^{n-1} (\phi_k -
  \phi^*_k) m_k + u(F_n-\phi_n) - u(F_n - \phi^*_n)\right]\\
&\geq \mathbb E\left[\sum_{k=0}^{n-1} (\phi_k -
  \phi^*_k) m_k - u'(F_n - \phi^*_n)(\phi_n -
  \phi^*_n)\right]\\
& = \mathbb E\left[\sum_{i=0}^n (\Delta \phi_i - \Delta \phi^*_i)
  \mathbb E\left[\sum_{k=i}^{n-1} m_k - u'(F_n - \phi^*_n)|\mathcal F_k\right]\right].
\end{align*}
Now, on the one hand, for all $k$, $\phi^*_n \geq \max_{k\leq i \leq
  n} \xi_i$, which means that 
$$
\mathbb E\left[\sum_{k=i}^{n-1} m_k - u'(F_n - \phi^*_n)|\mathcal
  F_k\right]\leq \mathbb E\left[\sum_{k=i}^{n-1} m_k - u'(F_n - \max_{k\leq i \leq
  n} \xi_i)|\mathcal
  F_k\right]=0.
$$
On the other hand, if, for some $k$, $\Delta \phi_k > 0 $ then $\phi_k
= \xi_k$ so that 
$$
\mathbb E\left[\sum_{k=i}^{n-1} m_k - u'(F_n - \phi^*_n)|\mathcal
  F_k\right] = \mathbb E\left[\sum_{k=i}^{n-1} m_k - u'(F_n - \max_{k\leq i \leq
  n} \xi_i)|\mathcal
  F_k\right] = 0.
$$
Since the processes $\phi$ and $\phi^*$ are nondecreasing, these
observations together with the above estimate imply that 
$$
J(\phi) - J(\phi^*)\geq 0,
$$
which means that $\phi^*$ is the optimal strategy. 

\end{proof}
\subsection{Trading for a large producer in presence of market impact}\label{large.sec}
Our aim now is to maximize the expected gain penalized by market
impact and volume penalty. 
The optimization problem therefore takes the
following form. 
\begin{align}
\min_{\phi \in \mathcal A^{ac}_+} \mathbb E\Bigg[\underbrace{\int_0^T \phi_t \mu_t
    dt}_{\text{\parbox{0.22\textwidth}{Expected loss from\\[-0.2cm]
        trading early (negative)}}} +
  \underbrace{\frac{\gamma}{2}\int_0^T \psi^2_s ds}_{\text{\parbox{0.13\textwidth}{Market
      impact\\[-0.2cm]($\psi = \phi'$)}}} 
+   \underbrace{u(F_T -
  \phi_T)}_{\text{volume penalty}}\Bigg]. \label{optprob.impact}
\end{align}
When buying electricity is allowed, the set $\mathcal A^{ac}_+$ is
replaced with the set $\mathcal A^{ac}$. 

We first consider the situation when the agent is only allowed to sell
electricity. 
As before, we first focus on the degenerate cases when the forecast is
either exact or unavailable. 
\begin{proposition}[Exact forecast]\label{mi.exact}
Let the penalty function $u$ be strictly convex and continuously
differentiable with $u'(0) = 0$ and $\lim_{x\to -\infty} u'(x) =
-\infty$. Then the optimal strategy for the problem
\eqref{optprob.impact} is given by 
$$ 
\bar{\psi_t} = \frac{1}{\gamma} \left( u'(F_T - \bar{\phi}_T) - \int_t^T \mu_s ds \right)^+,	
$$
where the terminal value $\bar{\phi}_T$ is the solution of the equation
$$ \phi = \frac{1}{\gamma} \int_0^T dt \left( u'(F_T - \phi) - \int_t^T \mu_s ds \right)^+. $$
\end{proposition}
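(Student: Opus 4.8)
The plan is to exploit the fact that $F_T\in\mathcal G_0$: once we condition on $\mathcal G_0$, problem \eqref{optprob.impact} becomes a pathwise deterministic, strictly convex variational problem in the trading rate $\psi\ge 0$, which can be solved via first-order conditions and then verified globally by convexity. Throughout we may use that $F_T$ takes values in $[0,1]$, so the integrability conditions defining $\mathcal A^{ac}_+$ are automatic for the strategies we build. First I would reduce the objective: by Fubini's theorem, using $\phi_0=0$ and $\phi_t=\int_0^t\psi_s\,ds$, one has $\int_0^T\phi_t\mu_t\,dt=\int_0^T\psi_t\,m_t\,dt$ with $m_t:=\int_t^T\mu_s\,ds$, so that, for fixed $\omega$ (writing $F_T$ for the constant $F_T(\omega)$), it suffices to minimise
$$
J(\psi)=\int_0^T\psi_t\,m_t\,dt+\frac{\gamma}{2}\int_0^T\psi_t^2\,dt+u\Big(F_T-\int_0^T\psi_s\,ds\Big)
$$
over nonnegative $\psi\in L^2([0,T])$ (otherwise the cost is $+\infty$). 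Since $u$ is convex and $\phi_T$ is affine in $\psi$, $J$ is convex, strictly so because of the quadratic impact term, so a minimiser, if it exists, is unique.

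Next I would write the first-order conditions. For a candidate $\bar\psi$ with terminal quantity $\bar\phi_T=\int_0^T\bar\psi_s\,ds$ and a perturbation $h$ with $\bar\psi+\varepsilon h\ge 0$ for small $\varepsilon>0$, the one-sided directional derivative of $J$ at $\bar\psi$ equals $\int_0^T h_t\,(m_t+\gamma\bar\psi_t-u'(F_T-\bar\phi_T))\,dt$. Requiring this to be $\ge 0$ for all admissible $h$ forces the bracket to vanish on $\{\bar\psi_t>0\}$ (both signs of $h_t$ are allowed there) and to be $\ge 0$ on $\{\bar\psi_t=0\}$, which is precisely
$$
\bar\psi_t=\frac1\gamma\big(u'(F_T-\bar\phi_T)-m_t\big)^+ ;
$$
integrating over $[0,T]$ then gives the stated fixed-point equation for $\bar\phi_T$.

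It remains to solve this fixed-point equation and to verify global optimality. For the former, set $\Psi(\phi)=\frac1\gamma\int_0^T\big(u'(F_T-\phi)-m_t\big)^+dt$ for $\phi\ge 0$; since $u$ is strictly convex, $u'$ is increasing, so $\phi\mapsto u'(F_T-\phi)$ is continuous and strictly decreasing, whence $\Psi$ is continuous, nonincreasing, $\Psi(0)\ge 0$, and — because $\lim_{x\to-\infty}u'(x)=-\infty$ while $m$ is bounded — $\Psi(\phi)=0$ for all $\phi$ large enough. Thus $\phi\mapsto\Psi(\phi)-\phi$ is continuous, strictly decreasing, nonnegative at $0$ and negative for large $\phi$, so it has a unique zero $\bar\phi_T\ge 0$; monotonicity shows $\bar\phi_T$ depends measurably on $F_T$, hence is $\mathcal G_0$-measurable, and the corresponding $\bar\psi$ (deterministic given $\mathcal G_0$) lies in $\mathcal A^{ac}_+$ since $F_T$ is bounded and $\int_0^T(|\mu_t|+\beta_t^2)\,dt<\infty$.

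For optimality, take any admissible $\psi$ with path $\phi$; the convexity bounds $u(F_T-\phi_T)\ge u(F_T-\bar\phi_T)+u'(F_T-\bar\phi_T)(\bar\phi_T-\phi_T)$ and $\frac\gamma2\psi_t^2\ge\frac\gamma2\bar\psi_t^2+\gamma\bar\psi_t(\psi_t-\bar\psi_t)$, combined with $\bar\phi_T-\phi_T=-\int_0^T(\psi_t-\bar\psi_t)\,dt$, yield
$$
J(\psi)-J(\bar\psi)\ \ge\ \int_0^T(\psi_t-\bar\psi_t)\,\big(m_t+\gamma\bar\psi_t-u'(F_T-\bar\phi_T)\big)\,dt .
$$
On $\{\bar\psi_t>0\}$ the bracket vanishes by construction; on $\{\bar\psi_t=0\}$ it equals $m_t-u'(F_T-\bar\phi_T)\ge 0$ while $\psi_t-\bar\psi_t=\psi_t\ge 0$, so the integrand is $\ge 0$ a.e.; hence $J(\psi)\ge J(\bar\psi)$, and taking expectations proves $\bar\psi$ optimal. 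The convex-analytic steps (the variational characterisation and the verification) are routine once the reduction to a deterministic problem is set up; I expect the only delicate point to be establishing existence and measurability of the fixed point $\bar\phi_T$, which is exactly where strict convexity of $u$, the conditions $u'(0)=0$ and $\lim_{x\to-\infty}u'(x)=-\infty$, and the boundedness of $F_T$ are used.
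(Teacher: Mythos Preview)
Your proof is correct and complete, but it takes a different route from the paper's. The paper applies Pontryagin's maximum principle: it writes the Hamiltonian $H(t,\phi,\psi;p)=\mu_t\phi+\tfrac{\gamma}{2}\psi^2+p\psi$, solves the adjoint equation $\dot p_t=-\mu_t$ with $p_T=-u'(F_T-\bar\phi_T)$, and then minimises $H$ over $\psi\ge 0$ to obtain the same formula for $\bar\psi_t$. You instead work directly in the convex-variational framework: after the integration by parts, you derive the KKT conditions for the constrained problem in $L^2$, solve the resulting fixed-point equation for $\bar\phi_T$ by a monotonicity/intermediate-value argument, and then verify global optimality via the tangent-line inequalities for $u$ and the quadratic impact term. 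Your approach is more self-contained---it avoids invoking the Pontryagin machinery and establishes sufficiency explicitly rather than relying on the (implicit) fact that the necessary conditions are sufficient in the convex case---and it also makes the existence and uniqueness of $\bar\phi_T$ fully rigorous, which the paper dismisses with ``it is easy to see''. The paper's approach, on the other hand, is shorter and would extend more readily to settings where the objective is not convex.
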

\begin{proof}
The Hamiltonian of this optimization problem is
	$$ H(t, \phi, x , \psi ; p) = \left(\mu_t \phi + \frac{\gamma}{2} \psi^2 \right) + p \psi$$ 
	By Pontriagin's principle for deterministic control problems,
        for each $t$, the optimal strategy $\bar{\psi_t}$ realizes the
        minimum of $H(t,\bar{\phi_t}, x, \psi; \bar{p_t}) = \bar{p_t}
        \psi + \frac{\gamma}{2}\psi^2 + \bar{\phi_t} \mu_t$ over all
        $\psi\geq0$, where $\bar{\phi_t} = \int_0^t \bar{\psi_s} ds$ and the function $\bar{p_t}$ satisfies
	$$ \frac{d}{dt} \bar{p_t} = - \frac{\partial}{\partial \phi} H(t,\bar{\phi_t}, x, \bar{\psi_t}; \bar{p_t}) =  - \mu_t, \qquad p_T = - u'(F_T - \bar{\phi}_T)$$
	
	Therefore, 
	$$ \bar{p_t} = - u'(F_T - \bar{\phi}_T) + \int_t^T \mu_s ds $$
	and finally
	$$ \bar{\psi_t} = \argmin_{\psi >0} H(t, \bar{\phi_t}, x, \psi ; \bar{p_t}) = - \frac{1}{\gamma} ( \bar{p_t} \wedge 0) = \frac{1}{\gamma} \left( u'(F_T - \bar{\phi}_T) - \int_t^T \mu_s ds \right)^+	
	$$
	where the terminal value $\bar{\phi}_T$ is the solution of the equation
	\begin{align} \phi = \frac{1}{\gamma} \int_0^T dt \left( u'(F_T - \phi) -
          \int_t^T \mu_s ds \right)^+ .\label{termeq}\end{align} It is easy to see that under
        our assumptions, this equation admits a unique solution which
        is strictly positive. 
\end{proof}

\begin{example}
	To obtain an explicit solution, assume that $\mu_s \equiv
        \mu<0$ is constant, and that the penalty function is
        quadratic: $u(x) = \frac{\kappa}{2} x^2$. A straightforward
        computation then gives the solution to equation
        \eqref{termeq}. 
	\begin{equation}
	\bar{\phi}_T = \left\{
	\begin{array}{ll}
	\displaystyle F_T - \frac{\frac{\mu T^2}{2} + \gamma F_T }{\kappa T + \gamma}, & \displaystyle \frac{\mu T^2}{2} + \gamma F_T  \geq 0 \\
	\displaystyle F_T - \frac{\mu}{\kappa^2}(\gamma+ \kappa T) - \sqrt{ \frac{\mu^2}{\kappa^4}(\gamma+ \kappa T)^2 - \frac{2\mu}{\kappa^2} \left( \frac{\mu T^2}{2} + \gamma F_T\right) }, & \displaystyle \frac{\mu T^2}{2} + \gamma F_T  < 0 
	\end{array}
	\right.
	\end{equation}

In the first case, $\bar{\phi}_T \leq F_T$ and the optimal trading
strategy satisfies
		$$\bar{\psi_t} = \frac{-\frac{1}{2}\kappa \mu T^2 + \gamma \kappa F_T - \mu \gamma T}{\gamma\kappa T + \gamma^2} + \frac{\mu}{\gamma} t \geq 0, \quad \forall t \in [0,T].$$ 
		In other words, the agent trades continuously between
                time $t = 0$ and $t=T$, at a linearly decreasing
                rate. The expected gain of the power producer is
			$$
			G_{exact}(T) = \mathbb{E}[P_T] F_T -\left[  \frac{\gamma \lambda}{2} T + \mu \lambda T^2 + \frac{\mu^2}{3 \gamma} T^3 + \frac{\kappa}{2} \left( \frac{\frac{\mu T^2}{2} + \gamma F_T }{\kappa T + \gamma}  \right)^2 \right] 
			$$
			where $\lambda = \frac{-\frac{1}{2}\kappa \mu T^2 + \gamma \kappa F_T - \mu \gamma T}{\gamma\kappa T + \gamma^2} $.
			
In the second case, we have $\bar{\phi}_T > F_T$. Introduce the time 
		\begin{equation}
			t^* =  T -  \left( T +\frac{\gamma}{\kappa} \right) + \sqrt{  \left( T +\frac{\gamma}{\kappa} \right)^2 - \frac{2}{\mu} \left( \frac{\mu T^2}{2} + \gamma F_T \right)}.
		\end{equation}
The optimal strategy is given by
		$$ \bar{\psi_t} = \frac{\mu \gamma + \sqrt{ \mu^2
                    \gamma^2 - 2 \mu \kappa \gamma ( \kappa F_T - \mu
                    T) } } {\gamma \kappa}  + \frac{\mu}{\gamma} t
                , $$
for $t \in [0, t^*]$ and $\bar \psi_t = 0$ for $t>t^*$. In other
words, the agent trades continuously at a linearly decreasing rate
until time $t^* < T$ and then stops. The expected gain of the power producer is
		$$
		G_{exact}(T) =\mathbb{E}[P_T]. F_T - \left[  \frac{\gamma \lambda}{2} t^* + \mu \lambda (t^*)^2 + \frac{\mu^2}{3 \gamma} (t^*)^3 + + \mu (T - t^*) \bar{\phi}_T + \frac{\kappa}{2} \left( F_T - \bar{\phi}_T \right)^2 \right] 
		$$
		where 
		$ \lambda  = \frac{\mu \gamma + \sqrt{ \mu^2 \gamma^2 - 2 \mu \kappa \gamma ( \kappa F_T - \mu T) } } {\gamma \kappa}  $

\end{example}
\begin{remark}
The solution in the case when no forecast is available can be obtained
by replacing the penalty function $u$ with the average penalty
$$ \tilde{u}(x) = \mathbb{E}\left[ u(F_T - \mathbb{E}[F_T] + x)
\right] $$
in Proposition \ref{mi.exact}. The strategy is completely
deterministic in this case since contrary to the situation without
market impact, there is no lump-sum trade at the terminal date. 
\end{remark}

\paragraph{Continuous forecast updates} In the presence of market
impact, since the trading strategy is necessarily continuous-time, we
assume that the forecast is updated in continuous time as well. 
To solve this problem, introduce the value function
$$
w(t,\phi,x) = \min_{\psi \geq 0} \mathbb E\left[\int_t^T \phi_s \mu_s ds +
  \frac{\gamma}{2}\int_t^T \psi^2_s ds +  u(F_T -
  \phi_T)\Big| \phi_t = \phi, X_t = x\right].
$$
The following proposition can be obtained using standard tools of
stochastic control (see Propositions 4.3.1 and 4.3.2 in
\cite{pham2009continuous}). These standard results do not guarantee the
uniqueness of the viscosity solution, because the set of controls is
not bounded, but if the strategy $\psi$ is restricted to a bounded
domain, uniqueness follows easily from Theorem 4.4.5 in \cite{pham2009continuous}.
\begin{proposition}\label{prophjb}
The value function $w(t,\phi,x)$ is a viscosity solution of
the Hamilton-Jacobi-Bellman equation
$$
\min_{\psi\geq 0}\left\{ \frac{\gamma}{2} \psi^2 +
\frac{\partial w}{\partial \phi} \psi \right\} + \phi \mu_t + \frac{\partial w}{\partial t}+ \frac{1}{2} \sigma_t^2 x^2
\frac{\partial^2 w}{\partial x^2} = 0
$$
or equivalently,
$$
- \frac{1}{2\gamma} \left(\frac{\partial w}{\partial \phi} \wedge 0\right)^2  + \phi \mu_t + \frac{\partial w}{\partial t}+ \frac{1}{2} \sigma_t^2 x^2
\frac{\partial^2 w}{\partial x^2} = 0
$$
for $\phi\geq 0$, $x\geq 0$ and $t\in [0,T]$ with the terminal condition 
$$
w(T,\phi,x) = u(f_{prod}(x) - \phi). 
$$
\end{proposition}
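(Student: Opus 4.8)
The plan is to follow the standard route for deriving an HJB equation in the viscosity sense, exactly as in Propositions~4.3.1--4.3.2 of \cite{pham2009continuous}, after recasting the problem in the Markovian framework. View the controlled state on $[t,T]$ as the pair $(\Phi,X)$, where $\Phi$ solves the controlled ODE $d\Phi_s=\psi_s\,ds$, $\Phi_t=\phi$, with control $\psi$ valued in the closed set $[0,\infty)$, and $X$ is the control-independent martingale diffusion $dX_s=\sigma_sX_s\,dW_s$, $X_t=x$, with time-inhomogeneous generator $\mathcal L_sh(x)=\tfrac{1}{2}\sigma_s^2x^2h''(x)$. The running cost is $\ell(s,\phi,\psi)=\phi\mu_s+\tfrac{\gamma}{2}\psi^2$ and the terminal cost is $u(f_{prod}(x)-\phi)$. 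A preliminary step is to check that $w$ is well defined, finite, locally bounded and continuous on $[0,T]\times[0,\infty)^2$: finiteness and local boundedness follow from the integrability assumptions on $u$ in force in the previous propositions, from comparison with the admissible control $\psi\equiv 0$ (finite upper bound) and from discarding the nonnegative market-impact term (finite lower bound); continuity follows from standard $L^2$-estimates, using that $\psi$ enters the dynamics of $\Phi$ linearly and the cost quadratically. The terminal condition $w(T,\phi,x)=u(f_{prod}(x)-\phi)$ is then immediate from the definition of $w$ at $s=T$.

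The core step is the dynamic programming principle: for every stopping time $\tau\in[t,T]$,
$$
w(t,\phi,x)=\inf_{\psi}\ \mathbb E\Big[\int_t^\tau\big(\Phi_s\mu_s+\tfrac{\gamma}{2}\psi_s^2\big)\,ds+w(\tau,\Phi_\tau,X_\tau)\Big],
$$
which I would quote from \cite{pham2009continuous} or, alternatively, prove directly by the usual concatenation and measurable-selection argument --- the needed structure (Polish control set $[0,\infty)$, continuous coefficients, admissible class stable under bifurcation) being in place. Given the DPP, the subsolution inequality at an interior point $(t_0,\phi_0,x_0)$ is obtained by inserting the constant control $\psi\equiv a\ge 0$ into the DPP over $[t_0,t_0+\varepsilon]$, taking a smooth test function $\varphi$ with $w-\varphi$ attaining a local maximum at $(t_0,\phi_0,x_0)$, applying It\^o's formula to $\varphi(s,\Phi_s,X_s)$, dividing by $\varepsilon$ and letting $\varepsilon\downarrow0$; this yields $\partial_t\varphi+\phi_0\mu_{t_0}+\tfrac{1}{2}\sigma_{t_0}^2x_0^2\partial_{xx}\varphi+\tfrac{\gamma}{2}a^2+a\,\partial_\phi\varphi\ge 0$ for all $a\ge 0$, hence the same inequality with $\min_{\psi\ge 0}\{\tfrac{\gamma}{2}\psi^2+\psi\,\partial_\phi\varphi\}$ in place of the last two terms. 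The supersolution inequality is obtained symmetrically, using $\varepsilon$-optimal (not necessarily constant) controls and a test function with $w-\varphi$ attaining a local minimum, together with the continuity of $w$, the passage to the limit being justified by localizing near $(t_0,\phi_0,x_0)$.

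Finally one records the elementary identity $\min_{\psi\ge 0}\{\tfrac{\gamma}{2}\psi^2+\psi p\}=-\tfrac{1}{2\gamma}(p\wedge 0)^2$, attained at $\psi^\star=-\tfrac{1}{\gamma}(p\wedge 0)$, which turns the first form of the equation into the second. The main obstacle is the one already pointed out in the statement: the control domain $[0,\infty)$ is unbounded, so the Hamiltonian is finite only after the minimization in $\psi$ has been carried out, and the generic comparison/uniqueness theorems do not apply as stated --- which is harmless for the \emph{existence} claim (only the a priori local boundedness and continuity of $w$ and the usual localization arguments are used) but is precisely why, as the paper notes, uniqueness must be recovered by restricting $\psi$ to a large bounded interval and invoking Theorem~4.4.5 of \cite{pham2009continuous}. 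Two minor points to check en route: the diffusion coefficient $\sigma_t^2x^2$ degenerates at $x=0$, which causes no trouble since $x=0$ is not attainable for $X$; and $\phi=0$ is an inflow boundary, automatically respected by the constraint $\psi\ge 0$, so no boundary condition is imposed there.
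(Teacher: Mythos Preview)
Your proposal is correct and follows precisely the route the paper itself indicates: the paper does not give a self-contained proof but simply invokes Propositions~4.3.1--4.3.2 of \cite{pham2009continuous} for the viscosity characterization and Theorem~4.4.5 of the same reference for uniqueness under a bounded-control restriction. Your write-up is in fact a more detailed version of that same argument (Markovian reformulation, DPP, sub/supersolution via It\^o and constant test controls, the explicit Hamiltonian identity), together with the same caveat on unbounded controls and uniqueness.
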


\paragraph{The case when both buy and sell transactions are allowed}
In the final paragraph we consider the situation when the agent can
both buy and sell electricity. In this case, for the quadratic penalty
function, the optimal strategy is
explicit (since $\mathcal A^{ac}$ is a linear space) and described by
the following proposition. For the non-quadratic penalty function the
optimal strategy may be found by solving the HJB equation as above.  
\begin{proposition}\label{propl2}
Assume that the penalty function is quadratic: $u(x) =
\frac{x^2}{2}$. Then the optimal trading rate satisfies
$$
\psi^*_t = \frac{\mathbb E[F_T|\mathcal G_t] - \phi^*_t -
  \frac{1}{\gamma}\int_t^T ds\, (\gamma + T-s) \mu_s ds}{\gamma + T-t}.
$$
\end{proposition}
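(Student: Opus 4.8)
The plan is to treat \eqref{optprob.impact} (with the set $\mathcal A^{ac}_+$ replaced by the linear space $\mathcal A^{ac}$ of absolutely continuous signed strategies) as a convex minimization over the Hilbert space of $\mathbb G$-adapted square-integrable rates $\psi$. Writing $\phi_t=\int_0^t\psi_s\,ds$, the functional
$$
J(\psi)=\mathbb E\Big[\int_0^T\phi_t\mu_t\,dt+\frac{\gamma}{2}\int_0^T\psi_t^2\,dt+\tfrac12(F_T-\phi_T)^2\Big]
$$
is strictly convex (because of the $\int\psi^2$ term) and coercive, hence it has a unique minimizer $\psi^*$, characterized by the vanishing of its G\^{a}teaux derivative in every admissible direction.

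First I would compute that derivative. Replacing $\psi^*$ by $\psi^*+\varepsilon h$ for an arbitrary adapted $h\in L^2$, so that $\phi^*$ is replaced by $\phi^*+\varepsilon\!\int_0^{\cdot}h_s\,ds$, differentiating at $\varepsilon=0$, and integrating by parts the term $\int_0^T\big(\int_0^t h_s\,ds\big)\mu_t\,dt$ using the auxiliary function $M_t:=\int_t^T\mu_s\,ds$ (which satisfies $M_T=0$), one obtains
$$
\mathbb E\Big[\int_0^T h_t\big(\gamma\psi^*_t+M_t-(F_T-\phi^*_T)\big)\,dt\Big]=0 .
$$
Since $h_t$ is $\mathcal G_t$-measurable, the tower property replaces $F_T-\phi^*_T$ by $\mathbb E[F_T-\phi^*_T\mid\mathcal G_t]$; and since $h$ is otherwise arbitrary, the fundamental lemma of the calculus of variations (for adapted $L^2$ processes) yields the Euler--Lagrange condition, which I shall call $(\star)$:
$$
\gamma\psi^*_t=\mathbb E[F_T\mid\mathcal G_t]-\mathbb E[\phi^*_T\mid\mathcal G_t]-M_t,\qquad dt\otimes d\mathbb P\text{-a.e.}
$$

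It then remains to solve $(\star)$ in closed form. Set $R_t:=\mathbb E[F_T\mid\mathcal G_t]-\phi^*_t$ and let $N_t:=\mathbb E[F_T-\phi^*_T\mid\mathcal G_t]=\mathbb E[R_T\mid\mathcal G_t]$, a $\mathbb G$-martingale. Then $(\star)$ reads $\gamma\psi^*_t=N_t-M_t$, so $\mathbb E[\psi^*_s\mid\mathcal G_t]=(N_t-M_s)/\gamma$ for $s\ge t$. Since $R_t-N_t=\mathbb E[\phi^*_T-\phi^*_t\mid\mathcal G_t]=\int_t^T\mathbb E[\psi^*_s\mid\mathcal G_t]\,ds$, integrating gives
$$
R_t=N_t\,\frac{\gamma+T-t}{\gamma}-\frac1\gamma\int_t^T M_s\,ds ,
$$
whence $N_t=\big(\gamma R_t+\int_t^T M_s\,ds\big)/(\gamma+T-t)$ and
$$
\psi^*_t=\frac{N_t-M_t}{\gamma}=\frac{\gamma R_t+\int_t^T M_s\,ds-(\gamma+T-t)M_t}{\gamma\,(\gamma+T-t)} .
$$
A one-line integration by parts gives $\int_t^T M_s\,ds-(\gamma+T-t)M_t=-\int_t^T(\gamma+T-s)\mu_s\,ds$, so the last display becomes exactly the stated formula (recall $R_t=\mathbb E[F_T\mid\mathcal G_t]-\phi^*_t$). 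Conversely, this formula is a linear stochastic ODE for $\phi^*$ driven by the martingale $\mathbb E[F_T\mid\mathcal G_t]$; it has a unique adapted solution, easily checked to lie in $\mathcal A^{ac}$ and to satisfy $(\star)$, so by strict convexity it is the optimizer.

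The main obstacle is the rigorous justification of the first-order condition: differentiation under the expectation, the integration by parts for the adapted perturbation, and above all the passage from the integrated identity to the pointwise identity $(\star)$. Once $(\star)$ is available and $N_t=\mathbb E[F_T-\phi^*_T\mid\mathcal G_t]$ is recognized as a $\mathbb G$-martingale, its resolution is completely elementary.
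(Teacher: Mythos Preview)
Your proof is correct and follows essentially the same route as the paper's: derive the first-order condition $\gamma\psi^*_t+M_t=\mathbb E[F_T-\phi^*_T\mid\mathcal G_t]$, observe that the right-hand side is a $\mathbb G$-martingale so that $\mathbb E[\psi^*_s\mid\mathcal G_t]$ can be computed explicitly, and substitute back to solve for $\psi^*_t$. Your presentation is more detailed (explicit G\^ateaux differentiation, the auxiliary variables $R_t$ and $N_t$, the existence/uniqueness discussion), but the underlying argument is identical to the paper's.
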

\begin{proof}
The first order condition writes
$$
\mathbb E\left[\int_0^T dt\, \xi_t\left(\int_t^T \mu_s ds
  +\gamma \psi_t  - (F_T -
  \phi_T) \right)
\right]=0
$$
for every $\mathbb G$-adapted process $\xi$. Therefore,
\begin{align}
\gamma \psi_t + \int_t^T \mu_s ds =  \mathbb E\left[
     F_T -
  \phi_T\Big|\mathcal G_t\right], \label{1storder}
\end{align}
which means that the left-hand side is a martingale. This in turn
means that for $s\geq t$, 
$$
\mathbb E[\psi_s|\mathcal G_t ] = \psi_t +
\frac{1}{\gamma}\int_t^s \mu_u du. 
$$
Substituting this formula into \eqref{1storder}, we then obtain
\begin{align*}
(\gamma+T-t) \psi_t  =  
     \mathbb E[F_T|\mathcal G_t] -
  \phi_t - \frac{1}{\gamma}\int_t^T ds\, (\gamma + T-s) \mu_s ds.
\end{align*}
\end{proof}

\subsection{Numerical illustrations}

In this section, we illustrate the optimal trading policies for a
large producer, determined in section \ref{large.sec}, with numerical examples.

In these examples, we assume that the trading takes place
continuously over $T=6$ days, that $\mu= -0.2$ (this means that the forward
price per MWh decreases by 1 euro every 5 days as one approaches
maturity), $\gamma = 4800$ (liquidating 0.1MWh over 1 hour has a cost
of approximately 1 euro), the daily volatility of $(X_t)$ is $\sigma_t
\approx 27\%$ (that is, $66\%$ over the 6 days; this corresponds roughly to the estimated value for one
of the power plants we studied in this paper), and that the penalty
function is $u(x) = P x^2$ with $P=100$ (this means that with, e.g.,
0.1MWh volume mismatch, the extra price to pay is 1 euro).  

To obtain the numerical examples, we first solve the HJB equation by finite
differences, and then simulate random trajectories of the forecast
process with a fixed value of realized production. For these
trajectories, we compute the corresponding trajectories of the optimal
trading strategy $\phi_t$. Figure \ref{strategies.fig} shows several
sample trajectories when only selling is allowed (on the left graph)
and when both buy and sell transactions are permitted (right graph). 

\begin{figure}
\centerline{\includegraphics[width=0.55\textwidth]{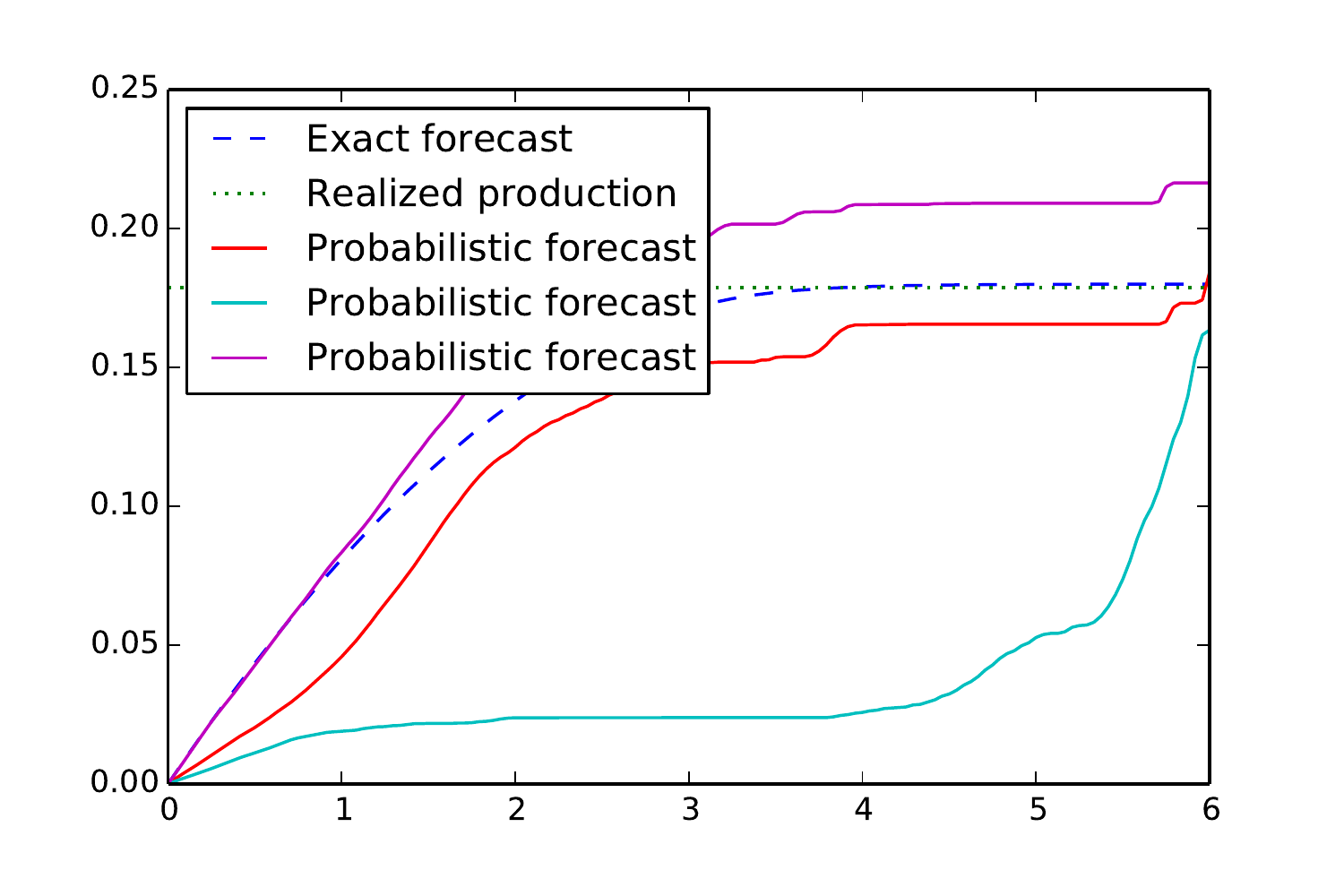}\includegraphics[width=0.55\textwidth]{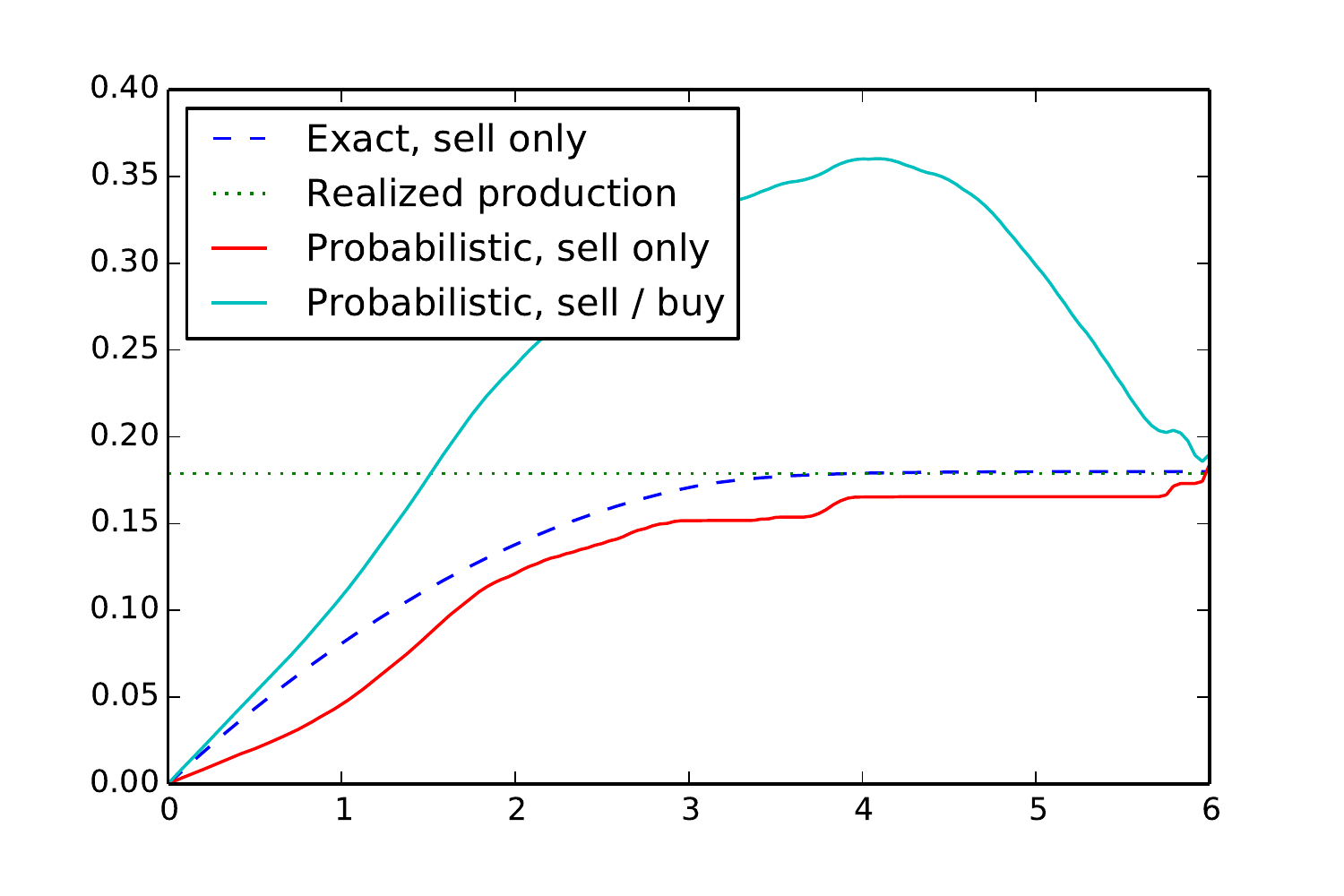}}

\caption{Sample selling strategies with market impact. Strategies are updated dynamically as new information
becomes available. The left graph shows three strategies where sales
only are allowed. The right graph shows one strategy with sales only
and one with both buy and sell transactions. }
\label{strategies.fig}
\end{figure}

Finally, we compute the realized penalty (the value of the expression
under the expectation in \eqref{optprob.impact}) corresponding to the
simulated trajectories of the forecast process and the optimal trading
strategy, with the objective of evaluating the economic value of the
optimal strategy in different contexts. Figure \ref{quality.fig}, left
graph compares the distribution of the realized penalty with
volatility $\sigma = 66\%$  and that with volatility $33\%$. One can
see that with the lower volatility, the premium for early trading
compensates the cost of market impact and the volume penalty, leading
to negative overall penalty for most
of the trajectories, whereas for the higher volatility, the penalty is
positive for most trajectories. Figure \ref{quality.fig}, right graph,
quantifies the impact of allowing both buy and sell transactions (the
volatility was taken to be $66\%$ for both experiments). One
can see that once again, if the agent is allowed to both buy and sell,
the premium for early trading compensates the volume penalty and the
cost of market impact.

\begin{figure}
\centerline{\includegraphics[width=0.55\textwidth]{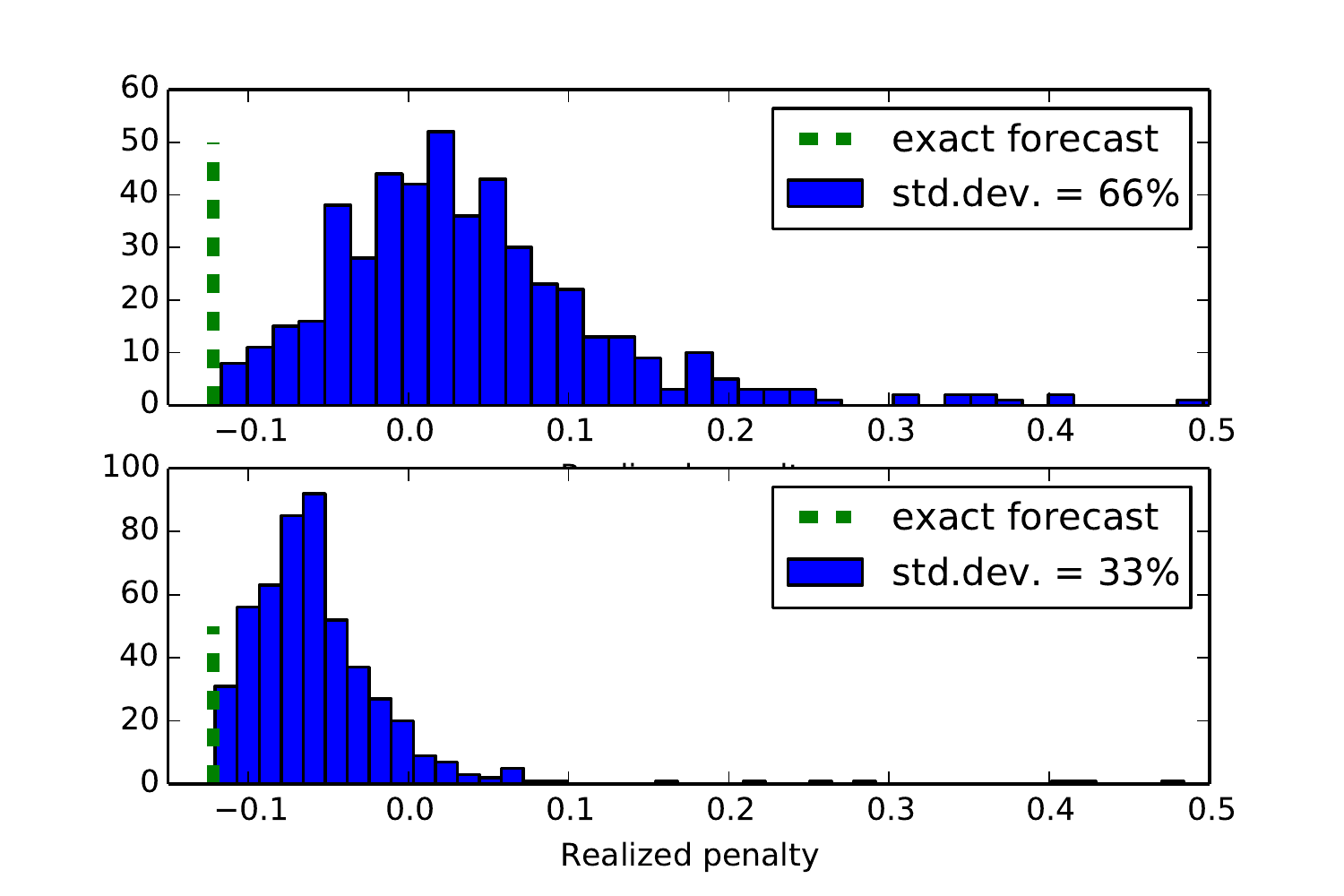}\includegraphics[width=0.55\textwidth]{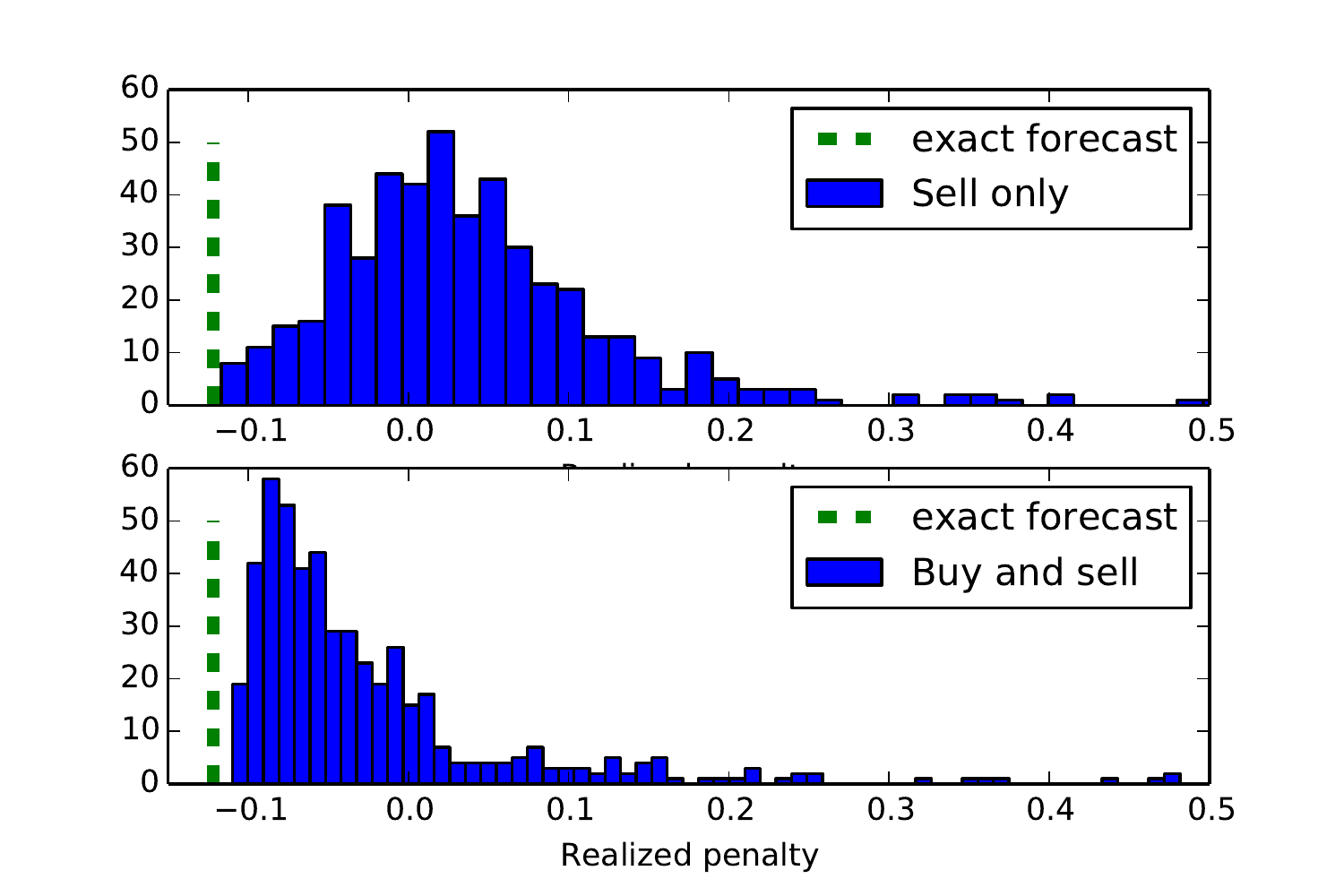}}

\caption{Left: Realized penalty for different forecast quality. Right:
realized penalty with and without buy transactions.} %\\Right: realized
%penalty as function of initial forecast. 
\label{quality.fig}
\end{figure}

\section*{Acknowledgement} We are grateful to Nicolas Girard and
Sophie Guignard from Ma\"ia Eolis and to J\'er\^ome Collet and Olivier
Feron from EDF Lab for helpful
discussions and for providing the data.
This work is supported by the French National
Research Agency (ANR) as part of the project Forewer (ANR-
14-CE05-0028).

%\bibliographystyle{siam}
%\bibliography{trading}

\end{document}